\theoremstyle{remark} 
\newtheorem{Definition}{\textit{Definition}}
\newtheorem{Theorem}{\textit{Theorem}}
\newtheorem{Lemma}{\textit{Lemma}}
\newtheorem{Remark}{\textit{Remark}}
\newtheorem{Example}{\textit{Example}}
\begin{document}
\title{New Construction of Z-Complementary Code Sets and Mutually Orthogonal Complementary Sequence Sets}
\author{Bingsheng Shen, Yang Yang, Zhengchun Zhou}
\maketitle
\begin{abstract}
  Due to the zero nontrivial aperiodic correlation of complete complementary code (CCCs), it is used in asynchronous multi carrier code division multiple access (MC-CDMA) communication to provide zero interference performance. However, there is no CCC for some lengths. In this case, if the system pays more attention to the correlation, it can use mutually orthogonal complementary sequence sets (MOCSSs), and if the system pays more attention to the set size, it can use Z-complementary code sets (ZCCSs). In this paper, we propose a direct construction of $q$-ary $(q^{v+1},q,q^m,q^{m-v})$-ZCCS. In addition, based on the known CCCs, we propose a new construction of MOCSSs and CCCs respectively.
\end{abstract}

\begin{IEEEkeywords}
  Complete complementary codes (CCCs), multi carrier code division multiple access (MC-CDMA), Z-complementary code sets (ZCCSs), mutually orthogonal complementary sequence sets (MOCSSs).
\end{IEEEkeywords}

\section{Introduction}
Golay proposed a pair of sequences, which the sum of their aperiodic auto-correlation function (AACF) is zero everywhere except at the zero-shift position, known as Golay complementary pair \cite{Golay-1961}. GCPs have found numerous applications in modern day communication systems, radar, image processing \cite{Davis-1999}--\cite{Pezeshki-2008}. For the specified alphabet set, the length of GCP is very limited. For example, the length of known binary GCPs all take the form $2^a10^b26^c$, where $a,~b,~c$ are non-negative integers.
Therefore, Tseng and Liu extended the idea of GCP to complementary sequence sets (CSSs), each set contains multiple sequences, and the sum of their AACF is also zero for all nonzero time shifts \cite{Tseng-1972}. In addition, they also proposed mutually orthogonal complementary sequence sets (MOCSSs). A $(M,N,L)$-MOCSS is a family of $M$  $(N,L)$-sequence sets, where $M$ denotes the set size (i.e., the number of users), $N$ denotes the flock size (i.e., the number of sub-carriers) and $L$ denotes the sequence length \cite{Tseng-1972}.
In 1988, Suehiro and Hatori given an upper bound of the set size of the $(M,N,L)$-MOCSS, i.e., $M\leq N$ \cite{Suehiro-1988}. When the equal sign holds, MOCSS is called a complete complementary code (CCC). Owing to their ideal correlation properties, MOCSSs or CCCs have found applications in synthetic aperture imaging systems \cite{Trots-2015}, OFDM-CDMA systems \cite{Zhang-2014}, especially the application multi-carrier code division multiple access (MC-CDMA) systems \cite{Tseng-2000}--\cite{Liu-2014}, because zero multi-path interference (MPI) and zero multi-user interference (MUI) performance can be achieved.
For more research on CCCs and MOCSSs, please refer to \cite{Liu-2014}--\cite{Shen-Yang-2021}.

A major disadvantage of CCC is that the number of supported users is limited by the number of row sequences in each complementary matrix. This problem can be solved by Z-complementary code sets (ZCCS), which there is a zero correlation zone (ZCZ) in the aperiodic cross- and auto-correlation, and set size is several times that of CCC \cite{Liu-Guan-2011}. A $(M,N,L,Z)$-ZCCS, the upper bound of its set size is given by \cite{Feng-2013}, i.e., $M\leq N\lfloor L/Z\rfloor$. When the equal sign holds, ZCCS is called optimal. If all the received multi-user signals are roughly synchronized within the ZCZ width, the ZCZ property of ZCCS can alleviate multiple-access interference. In addition to their applications in MC-CDMA, ZCCS have also been employed as optimal training sequences in multiple-input multiple-output (MIMO) communications \cite{Wang-Gao-2007,Yuan-Tu-2008}.

In terms of the construction of ZCCS, it is currently divided into the following categories: The first is the direct construction based on generalized Boolean functions (GBFs), and the parameters of this kind of construction are all related to two.
Such as, Wu \emph{et. al.} proposed a construction of $(2^{k+v},2^k,2^m,2^{m-v})$-ZCCS based on second order GBFs, where $v\leq m$, $k\leq m-v$ \cite{Wu-Chen-2018}. Sarkar \emph{et. al.} proposed a construction of $(2^{k+p+1},2^{k+1},2^m,2^{m-p})$-ZCCS based on second order GBFs in 2019, where $k+p\leq m$ \cite{Sarkar-Majhi-2019}. Sarkar explained their construction from the perspective of graph theory of GBF, so the number of ZCCS constructed is more than \cite{Wu-Chen-2018}. Later on, Sarkar \emph{et. al.} present a construction of $(2^{n+p},2^{n},2^m,2^{m-p})$-ZCCS based on higher order $(\geq2)$ GBFs in 2021, where $p\leq m$ \cite{Sarkar-Majhi-2021}. In addition, Sarkar \emph{et. al.} also proposed a construction of ZCCS with non-power-of-two lengths based on GBFs. Wu \emph{et. al.} also put forward a construction of ZCCS with non-power-of-two lengths, which is not optimal, but its length is very flexible \cite{Wu-Chen-2021}.
The second is based on the construction of the Z-paraunitary matrices (ZPU) and unitary matrices \cite{Das-Parampalli-2020,Chen-Li-2020}, this construction depends on the existence of the ZPU and unitary matrix.
The third is based on GCP and Z-complementary pair (ZCP), which can be referred to \cite{Li-Xu-2015},\cite{Adhikary-Majh-2019}.

In this paper, we first propose a construction of $q$-ary $(q^{v+1},q,q^m,q^{m-v})$-ZCCS based on extended Boolean function (EBF).
EBF is a generalization of Boolean function, defined as a mapping from $\mathbb{Z}_q^m$ to $\mathbb{Z}_q$. As far as we know, the parameters of the known direct constructions based on GBFs are always related to $2$.
However, due to the arbitrariness of $q$, the construction of this paper breaks through this limitation.
In addition, we generalize the result of literature \cite{Shen-Yang-2021} so that it can construct $(M_1M_2/2,2L_1L_2)$-CCC based on $(M_1,L_1)$-CCC and $(M_2,L_2)$-CCC.
Finally, we construct a $(M,2M,L_1+L_2)$-MOCSS from $(M,L_1)$-CCC and $(M,L_2)$-CCC.
Although this construction is simple, it can summarize all current constructions in terms of sequence length.

The rest of this paper is organized as follows. In Section II, we review some definitions. We will provide a construction of ZCCS in Section III, and propose a novel construction of CCC and MOCSS in Section VI. In Section V, we will compare our results with the existing works, and conclude this paper.

\section{Preliminary}
Before we begin, let us define some notations, which will be used throughout this paper.
\begin{itemize}
	\item $\xi=e^{\sqrt{-1}\frac{2\pi}{q}}$ denotes the $q$-th root of unity;
	\item $\mathbb{Z}_q=\{0,1,\cdots,q-1\}$ is an additive group of order $q$;
    \item Bold small letter $\mathbf{a}$ denotes a sequence of length $L$, i.e., $\mathbf{a}=(a_0,a_1,\dots,a_{L-1})$;
    \item $\mathcal{I}(\mathbf{a},\mathbf{b})=(a_0,b_0,a_1,b_1,\cdots,a_ {L-1},b_{L-1})$ denotes the interleaved sequence of $\mathbf{a}$ and $\mathbf{b}$;
    \item $\mathbf{a}\oplus d=(a_0\oplus d,a_1\oplus d,\cdots,a_{L-1}\oplus d)$, where $\oplus$ denotes the addition over $\mathbb{Z}_q$, and $d\in \mathbb{Z}_q$.
	\item $\mathbf{a}|\mathbf{b}=(a_0,a_1,\cdots,a_ {L-1},b_0,\cdots,b_{L-1})$ denotes the concatenation of $\mathbf{a}$ and $\mathbf{b}$;
\end{itemize}

\subsection{Definitions of Correlations and Sequences}
A sequence $\mathbf{a}=(a_0,a_1,\cdots,a_{L-1})$ is called $q$-ary sequence if $a_i\in\mathbb{Z}_q$ for all $i\in\{0,1,\cdots,L-1\}$. A sequence set (SS) $\mathbf{A}$ contains $N$ sequences of length $L$, denoted by $(N,L)$-SS,  is usually represented by an $N\times L$ matrix, namely $\mathbf{A} = [\mathbf{a}_0^T,\mathbf{a}_1^T,\cdots,\mathbf{a}_{N-1}^T]^T$,
where $\mathbf{a}_n~(0\leq n\leq N-1)$ is the $(n+1)$-th row sequence of $\mathbf{A}$.

\begin{Definition}
	Given two length-$L$ $q$-ary sequences $\mathbf{a}$ and $\mathbf{b}$, their aperiodic cross-correlation function (ACCF) is defined as
	\begin{align}
		R_{\mathbf{a},\mathbf{b}}(\tau)=\begin{cases}
        \sum\limits_{i=0}^{L-1-\tau}\xi^{a_i-b_{i+\tau}},&0\leq\tau\leq L-1,\\
        \sum\limits_{i=0}^{L-1+\tau}\xi^{a_{i-\tau}-b_i},&1-L\leq\tau\leq -1,\\
                0,&\mbox{otherwise.}
    \end{cases}
	\end{align}
	When $\mathbf{a}=\mathbf{b}$, $R_{\mathbf{a},\mathbf{b}}(\tau)$ is called the aperiodic auto-correlation function (AACF). For simplicity, the AACF of $\mathbf{a}$ will be written as $R_{\mathbf{a}}(\tau)$.
\end{Definition}

According to the definition of aperiodic correlation function, we have
\begin{align}\label{eq-RabRba}
  R_{\mathbf{b},\mathbf{a}}(-\tau)=R_{\mathbf{a},\mathbf{b}}^{*}(\tau),
\end{align}
where $(\cdot)^{*}$ denotes the conjugate of complex.

\begin{Definition}
	Let $\mathbf{A}$ and $\mathbf{B}$ be two $(N,L)$-SSs. The ACCF between $\mathbf{A}$ and $\mathbf{B}$ is defined as
	\begin{align}
		R_{\mathbf{A},\mathbf{B}}(\tau) = \sum\limits_{n=1}^NR_{\mathbf{a}_n,\mathbf{b}_n}(\tau),~0\leq |\tau| \leq L-1.
	\end{align}
	When $\mathbf{A}=\mathbf{B}$, $R_{\mathbf{A},\mathbf{B}}(\tau)$ is called the AACF, denoted by $R_{\mathbf{A}}(\tau)$ for short.
\end{Definition}

\begin{Definition}
  Let $\mathbf{A}$ be a $(N,L)$-SS. $\mathbf{A}$ is a called complementary sequence set (CSS) if $R_{\mathbf{A}}(\tau)=0$ for all $\tau\neq0$, denoted by $(N,L)$-CSS.
\end{Definition}

\begin{Definition}
  Let $\mathbf{A}$ be a $(N,L)$-SS. $\mathbf{A}$ is called an even-shift complementary sequence sets (ESCSS) if $R_{\mathbf{A}}(\tau)=0$ for all even number $1<|\tau|\leq L-1$, denoted by $(N,L)$-ESCSS.
\end{Definition}

\begin{Definition}
	Let $\mathcal{A} = \{\mathbf{A}_0,\mathbf{A}_1,\cdots,\mathbf{A}_{M-1}\}$ be a family of $M$ matrices, each of size $N\times L$, i.e.,
\begin{align}
  \mathbf{A}_m=\begin{bmatrix}
	\mathbf{a}_{0}^m\\
	\mathbf{a}_{1}^m\\
	\vdots\\
	\mathbf{a}_{N-1}^m
	\end{bmatrix}_{N\times L},
\end{align}
The family $\mathcal{A}$ is called a Z-complementary code set (ZCCS), denoted by $(M,N,L,Z)$-ZCCS, if
\begin{align}
	R_{\mathbf{A}_i,\mathbf{A}_j}(\tau) = \begin{cases}
		NL,& i=j,\tau=0;\\
        0, & i=j,1\leq|\tau|\leq Z-1;\\
		0, & i\neq j,0\leq|\tau|\leq Z-1.
	\end{cases}
\end{align}
When $Z=L$, the ZCCS is called mutually orthogonal complementary sequence set (MOCSS), denoted by $(M,N,L)$-MOCSS.
\end{Definition}

\begin{Lemma} [\cite{Suehiro-1988}]
For a $(M,N,L)$-MOCSS, the upper bound of family size satisfies
\begin{align}
	M\leq N.
\end{align}
When the equal sign holds, MOCSS is called a complete complementary code (CCC), denoted by $(M,L)$-CCC.
\end{Lemma}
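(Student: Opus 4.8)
The plan is to reduce the correlation conditions of the MOCSS to orthogonality of a family of vectors in $\mathbb{C}^{N}$, and then read off the bound from the ambient dimension. To each row sequence $\mathbf{a}_{n}^{m}=(a_{n,0}^{m},\dots,a_{n,L-1}^{m})$ I would associate the polynomial $A_{n}^{m}(z)=\sum_{k=0}^{L-1}\xi^{a_{n,k}^{m}}z^{k}$ and the companion $\widehat{A_{n}^{m}}(z)=\sum_{k=0}^{L-1}\xi^{-a_{n,k}^{m}}z^{-k}$. Expanding the product and matching powers against the piecewise definition of the ACCF yields the Laurent-polynomial identity
\[
\sum_{\tau=1-L}^{L-1}R_{\mathbf{a}_{n}^{i},\mathbf{a}_{n}^{j}}(\tau)\,z^{-\tau}=A_{n}^{i}(z)\,\widehat{A_{n}^{j}}(z),
\]
because the coefficient of $z^{-\tau}$ on the right collects exactly the terms $\xi^{\,a_{n,k}^{i}-a_{n,l}^{j}}$ with $l-k=\tau$, and these sum to $R_{\mathbf{a}_{n}^{i},\mathbf{a}_{n}^{j}}(\tau)$ in both cases $\tau\ge 0$ and $\tau<0$. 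Summing over $n=0,\dots,N-1$ and using the definition of $R_{\mathbf{A}_{i},\mathbf{A}_{j}}$ gives $\sum_{\tau}R_{\mathbf{A}_{i},\mathbf{A}_{j}}(\tau)\,z^{-\tau}=\sum_{n=0}^{N-1}A_{n}^{i}(z)\,\widehat{A_{n}^{j}}(z)$.

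Next I would invoke the defining property of the MOCSS: the left-hand side above equals the constant $NL$ when $i=j$ and is identically zero when $i\neq j$. Hence, as Laurent polynomials, $\sum_{n=0}^{N-1}A_{n}^{i}(z)\,\widehat{A_{n}^{j}}(z)=NL\,\delta_{i,j}$, and in particular this holds when $z$ is specialized to any point of the unit circle — say $z=1$ — where $\widehat{A_{n}^{j}}(z)=\overline{A_{n}^{j}(z)}$ since $\xi$ is a root of unity and $z^{-1}=\overline{z}$. Setting $\mathbf{v}_{m}:=\big(A_{0}^{m}(1),A_{1}^{m}(1),\dots,A_{N-1}^{m}(1)\big)\in\mathbb{C}^{N}$, the relation reads $\langle\mathbf{v}_{i},\mathbf{v}_{j}\rangle=NL\,\delta_{i,j}$. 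Thus $\mathbf{v}_{0},\dots,\mathbf{v}_{M-1}$ are pairwise orthogonal, each of squared norm $NL\neq 0$, hence linearly independent in $\mathbb{C}^{N}$, which forces $M\le N$; the case of equality is what we call a $(M,L)$-CCC.

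I expect the only delicate point to be the first identity: one has to separate the cases $\tau\ge 0$ and $\tau<0$ in the ACCF definition and check that the summation ranges line up with the degrees appearing in $A_{n}^{i}(z)\,\widehat{A_{n}^{j}}(z)$ — purely mechanical once set up. The substantive content, and the reason the bound is $N$ rather than the trivial $NL$ one would get from the $\tau=0$ condition alone, is the observation that summing the correlations over all shifts packages the data into a generating function whose value at a fixed $z$ lives in the $N$-dimensional space $\mathbb{C}^{N}$.
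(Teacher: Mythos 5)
Your argument is correct and complete. Note that the paper does not actually prove this lemma; it is quoted from Suehiro and Hatori's 1988 paper, so there is no in-text proof to compare against. Your route --- packaging the shift-indexed correlation conditions into the single Laurent-polynomial identity $\sum_{n}A_{n}^{i}(z)\widehat{A_{n}^{j}}(z)=NL\,\delta_{i,j}$ and then evaluating on the unit circle to obtain $M$ pairwise orthogonal nonzero vectors in $\mathbb{C}^{N}$ --- is essentially the classical $z$-transform proof of the set-size bound, and every step checks out: the coefficient bookkeeping matches the paper's piecewise definition of the ACCF in both the $\tau\ge 0$ and $\tau<0$ regimes, the MOCSS condition with $Z=L$ does force the left-hand side to be the constant $NL\,\delta_{i,j}$, and $\widehat{A_{n}^{j}}(z)=\overline{A_{n}^{j}(z)}$ on $|z|=1$ because the entries are roots of unity. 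One small remark: since the identity holds for every $z$ on the unit circle, you in fact get that the $N\times M$ matrix $\bigl[A_{n}^{m}(z)\bigr]$ has orthogonal columns of equal norm for all such $z$ simultaneously (the paraunitarity viewpoint), which is stronger than what the bound needs; a single evaluation point, as you chose, already suffices.
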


\begin{Lemma} [\cite{Feng-2013}]
  For any $(M,N,L,Z)$-ZCCS, there exists an inequality given as
  \begin{align}\label{eq-bound-zccs}
    M\leq N\left\lfloor\frac{L}{Z}\right\rfloor.
  \end{align}
  When the equal sign holds, the ZCCS is said to be optimal.
\end{Lemma}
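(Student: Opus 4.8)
The proof I would aim for rests on the elementary fact that if $\mathbf{v}_0,\dots,\mathbf{v}_{M-1}$ are nonzero vectors spanning a subspace of $\mathbb{C}^{D}$ of dimension $M$, then $M\le D$; the whole game is to realize such a configuration with $D=N\lfloor L/Z\rfloor$. It is cleanest to work with the $M\times M$ correlation matrices $\mathbf{R}(\tau):=\big[R_{\mathbf{A}_i,\mathbf{A}_j}(\tau)\big]_{0\le i,j\le M-1}$. Putting $\mathbf{c}_{n,l}:=\big(\xi^{a^i_{n,l}}\big)_{0\le i\le M-1}\in\mathbb{C}^{M}$ for the ``column over all codes'', one has $\mathbf{R}(\tau)=\sum_{n}\sum_{l}\mathbf{c}_{n,l}\mathbf{c}_{n,l+\tau}^{*}$, the inner sum running over those $l$ with $l$ and $l+\tau$ both in $\{0,\dots,L-1\}$. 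The ZCCS hypothesis is then simply $\mathbf{R}(0)=NL\cdot I_M$ and $\mathbf{R}(\tau)=\mathbf{0}$ for $1\le|\tau|\le Z-1$. Since $\operatorname{rank}\mathbf{R}(0)=M$, it suffices to write $\mathbf{R}(0)$ as a sum of at most $N\lfloor L/Z\rfloor$ rank-one matrices.

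To bring in the factor $K:=\lfloor L/Z\rfloor$, split $\{0,\dots,L-1\}$ into $K$ consecutive windows $W_0,\dots,W_{K-1}$ of length $Z$ (the leftover $L-KZ<Z$ indices forming a short tail, dealt with separately), and group the outer products defining $\mathbf{R}(0)$ accordingly, so that $\mathbf{R}(0)=\sum_{n}\sum_{k=0}^{K-1}\mathbf{G}_{n,k}$ up to the tail, with $\mathbf{G}_{n,k}=\sum_{l\in W_k}\mathbf{c}_{n,l}\mathbf{c}_{n,l}^{*}$. The mechanism I would try to exploit is a rank-one phenomenon for the $\mathbf{G}_{n,k}$: forming the vector-valued generating polynomial $\mathbf{P}_{n,k}(z)=\sum_{l\in W_k}\mathbf{c}_{n,l}z^{l}$, on the unit circle one has $\mathbf{P}_{n,k}(z)\mathbf{P}_{n,k}(z)^{*}=\sum_{\tau}\mathbf{H}_{n,k}(\tau)z^{\tau}$, where $\mathbf{H}_{n,k}(\tau)$ is the window's aperiodic vector-autocorrelation at shift $\tau$; if every $\mathbf{H}_{n,k}(\tau)$ with $\tau\neq 0$ vanished, then $\mathbf{P}_{n,k}(z)\mathbf{P}_{n,k}(z)^{*}\equiv\mathbf{G}_{n,k}$, and since the left-hand side is rank one for each $z$, $\mathbf{G}_{n,k}$ would be rank one, whence $\operatorname{rank}\mathbf{R}(0)\le NK$ and $M\le N\lfloor L/Z\rfloor$ would follow.

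The main obstacle is the mismatch between what is assumed and what this argument needs: the ZCCS property gives only that $\sum_{n}$ of the autocorrelation of the \emph{entire} row (all $L$ positions) vanishes for shifts $1,\dots,Z-1$, whereas the rank-one step wants the vanishing window by window (and summed over $n$), and the difference is exactly the collection of ``straddling'' outer products $\mathbf{c}_{n,l}\mathbf{c}_{n,l'}^{*}$ with $l,l'$ in two consecutive windows, which the hypothesis does not annihilate; the same issue persists even after setting aside the length-$(L-KZ)$ tail correction when $Z\nmid L$. Controlling these cross-window terms — by a careful reorganization of the double sum that keeps track of exactly which shifts land inside the zero zone — is the delicate core of the proof, and it is precisely this bookkeeping that produces the floor $\lfloor L/Z\rfloor$ rather than the off-by-about-$N$ bound that a naive zero-guard reduction of the code to a zero-correlation-zone sequence set would give. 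Rather than grind through it I would invoke the analysis of \cite{Feng-2013} at this point, or, to keep the note self-contained, devote essentially all of the work to this cross-window cancellation.
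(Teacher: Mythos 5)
This Lemma is imported from \cite{Feng-2013}: the paper states it with a citation and offers no proof, so there is no in-paper argument to compare against, only the reference itself (whose proof is an averaging, Welch-type bound on correlation sums rather than a rank decomposition). Your setup is correct as far as it goes: the identity $\mathbf{R}(\tau)=\sum_{n,l}\mathbf{c}_{n,l}\mathbf{c}_{n,l+\tau}^{*}$, the reformulation of the ZCCS property as $\mathbf{R}(0)=NL\,I_M$ and $\mathbf{R}(\tau)=\mathbf{0}$ for $1\le|\tau|\le Z-1$, and the observation that exhibiting $\mathbf{R}(0)$ as a sum of at most $N\lfloor L/Z\rfloor$ rank-one matrices would finish the job are all sound.

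However, the proposal stops at exactly the step that carries all of the content, and the gap is not merely unfinished bookkeeping. The hypothesis controls only the sum over the full length $L$ of the vector autocorrelations; nothing in the definition of a ZCCS forces the per-window quantities $\mathbf{H}_{n,k}(\tau)$ to vanish, and in general they do not, so the $\mathbf{G}_{n,k}$ need not be rank one and the window-by-window strategy as stated cannot be completed without a genuinely different idea. Your fallback of invoking ``the analysis of \cite{Feng-2013}'' at the critical moment turns the argument into a citation --- which is precisely what the paper already does, so nothing new has been proved. It is also worth recording why the difficulty is real: the rank/orthogonality argument that \emph{does} close easily (take the $Z$ zero-padded shifts of each code by $0,1,\dots,Z-1$; the ZCZ property makes these $MZ$ vectors pairwise orthogonal and nonzero in $\mathbb{C}^{N(L+Z-1)}$) yields only $MZ\le N(L+Z-1)$, i.e.\ $M\le N(L+Z-1)/Z$, which is strictly weaker than $M\le N\lfloor L/Z\rfloor$. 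Extracting the floor requires the finer correlation-sum analysis of \cite{Feng-2013} (see also \cite{Liu-Guan-2011}), so the cross-window cancellation you flagged is the whole theorem, not a deferred technicality.
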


\subsection{Extended Boolean function}
A extended Boolean function (EBF) $f(\mathbf{x})$ is defined as a mapping $f:\mathbb{Z}_q^m\rightarrow\mathbb{Z}_q$, where $\mathbf{x}=(x_1,x_2,\cdots,x_m)$ be the $q$-ary representation of the integer $x=\sum_{k=1}^{m}x_kq^{k-1}$. Given $f(\mathbf{x})$, let $f_i=f(i_1,i_2,\cdots,i_m)$, and define the associated sequence $\mathbf{f}$ of length $q^m$ as
\begin{align}
  \mathbf{f}:=(f_0,f_1,\cdots,f_{q^m-1}).
\end{align}

\begin{Example}
  Let $m=2$ and $q=3$. The associated sequences for extended Boolean functions of $x_1,x_2,x_1x_2+1$ are
  \begin{align}\begin{split}
    \mathbf{x}_1 &:=(0,1,2,0,1,2,0,1,2),\\
    \mathbf{x}_2 &:=(0,0,0,1,1,1,2,2,2),\\
    \mathbf{x}_1\mathbf{x}_2+\mathbf{1} &:=(1,1,1,1,2,0,1,0,2),
  \end{split}\end{align}
  respectively.
\end{Example}

\section{Construction of ZCCSs Based on EBFs}
In this section, we propose a new construction of ZCCS. Before giving the main theorem, we first introduce a construction of ternary complementary sequence set.
\begin{Lemma}[\cite{Wang-Ma-2020}]\label{le-css}
  Let
  \begin{align}
    f(\mathbf{x}) &= \alpha\sum\limits_{k=1}^{m-1}x_{\pi(k)}x_{\pi(k+1)}+\sum_{k=1}^{m}c_kx_k^2+\sum_{k=1}^{m}d_kx_k+d_0,\\
    g(\mathbf{x}) &= f(\mathbf{x})+x_{\pi(1)},\\
    h(\mathbf{x}) &= f(\mathbf{x})+2x_{\pi(1)}.
  \end{align}
  where $\alpha\in\mathbb{Z}_3^{*}$, $c_k,d_k\in\mathbb{Z}_3$, $\pi$ is a permutation of the set $\{1,2,\cdots,m\}$, $m$ is a positive integer. Then the set $\{\mathbf{f},\mathbf{g},\mathbf{h}\}$ form a ternary CSS of length $3^m$.
\end{Lemma}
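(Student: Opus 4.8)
The plan is to run the classical Davis--Jedwab-type pairing argument, transported from the binary alphabet to $\mathbb{Z}_3$. By the definition of a CSS it suffices to prove that $S(\tau):=R_{\mathbf f}(\tau)+R_{\mathbf g}(\tau)+R_{\mathbf h}(\tau)=0$ for every $\tau\neq 0$, and by \eqref{eq-RabRba} applied to auto-correlations we have $S(-\tau)=S(\tau)^{*}$, so we may assume $1\leq\tau\leq 3^m-1$. For an index $i$ with $0\leq i\leq 3^m-1-\tau$, write $(i_1,\dots,i_m)$ and $(i'_1,\dots,i'_m)$ for the $3$-ary digit vectors of $i$ and $i+\tau$. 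Since $g(\mathbf{x})=f(\mathbf{x})+x_{\pi(1)}$ and $h(\mathbf{x})=f(\mathbf{x})+2x_{\pi(1)}$, a short manipulation gives
\begin{align}
  S(\tau)=\sum_{i}\xi^{f_i-f_{i+\tau}}\left(1+\xi^{i_{\pi(1)}-i'_{\pi(1)}}+\xi^{2(i_{\pi(1)}-i'_{\pi(1)})}\right),
\end{align}
and the parenthesised factor is $1+\omega+\omega^2$ with $\omega=\xi^{i_{\pi(1)}-i'_{\pi(1)}}$ a cube root of unity, hence equal to $3$ when $i_{\pi(1)}=i'_{\pi(1)}$ and to $0$ otherwise. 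Thus $S(\tau)=3\sum_{i:\,i_{\pi(1)}=i'_{\pi(1)}}\xi^{f_i-f_{i+\tau}}$, and the problem reduces to showing this restricted sum vanishes.

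Next I would organise the restricted index set into blocks of size three. Fix $i$ with $i_{\pi(1)}=i'_{\pi(1)}$; since $\tau\neq 0$ forces $i\neq i+\tau$, the smallest $j$ with $i_{\pi(j)}\neq i'_{\pi(j)}$ exists and satisfies $j\geq 2$. Set $p=\pi(j-1)$, so $i_p=i'_p$ (because $j-1<j$), and for $t\in\{0,1,2\}$ let $i^{(t)}$ be obtained from $i$ by overwriting its $p$-th digit with $t$; this replaces a single digit and produces no carry, so $i^{(t)}$ differs from $i$ only in position $p$. The crucial structural point to verify is that $i_p=i'_p$ forces $i^{(t)}+\tau$ to be exactly $i+\tau$ with its $p$-th digit overwritten by $t$ (the equality $i_p=i'_p$ is equivalent to the incoming carry plus $\tau_p$ being $\equiv 0\pmod 3$, which makes the carry out of position $p$ the same for $i$ and for $i^{(t)}$). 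From this it follows that each $i^{(t)}$ again lies in $[0,3^m-1-\tau]$, still satisfies $(i^{(t)})_{\pi(1)}=(i^{(t)}+\tau)_{\pi(1)}$, and has the same smallest disagreement index $j$ and the same $p$; hence $\{i^{(0)},i^{(1)},i^{(2)}\}$ is a genuine three-element block and these blocks partition the restricted index set.

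The third step is to evaluate one block. Overwriting digit $p=\pi(j-1)$ changes only the terms $\alpha x_{\pi(j-2)}x_{\pi(j-1)}$ (vacuous when $j=2$, which I absorb by the convention $i_{\pi(0)}=0$), $\alpha x_{\pi(j-1)}x_{\pi(j)}$, $c_p x_p^2$ and $d_p x_p$ of $f$. Computing $f_{i^{(t)}}-f_i$ and $f_{i^{(t)}+\tau}-f_{i+\tau}$ and subtracting, the $c_p$- and $d_p$-contributions cancel because $i_p=i'_p$, the $\pi(j-2)$-term cancels because $i$ and $i+\tau$ agree in position $\pi(j-2)$, and what survives is
\begin{align}
  \left(f_{i^{(t)}}-f_{i^{(t)}+\tau}\right)-\left(f_i-f_{i+\tau}\right)\equiv\alpha\,(t-i_p)\,(i_{\pi(j)}-i'_{\pi(j)})\pmod 3.
\end{align}
Writing $\epsilon=i_{\pi(j)}-i'_{\pi(j)}\not\equiv 0\pmod 3$, the block's contribution is $\xi^{f_i-f_{i+\tau}-\alpha\epsilon i_p}\sum_{t=0}^{2}\xi^{\alpha\epsilon t}=0$, since $\alpha\in\mathbb{Z}_3^{*}$ and $\epsilon\not\equiv 0$ make $\xi^{\alpha\epsilon}$ a primitive cube root of unity. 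Summing over all blocks yields $S(\tau)=0$, proving the lemma.

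I expect the only delicate point to be the middle paragraph: checking that overwriting a single $3$-ary digit commutes appropriately with the addition of $\tau$ (no carry interference) precisely because $i_p=i'_p$, so that the blocks genuinely partition the restricted index set and both the membership condition and the value of $j$ are constant along each block. The remaining work is two applications of the identity $1+\omega+\omega^2=0$ for a primitive cube root of unity $\omega$, together with the routine determination of which monomials of $f$ are disturbed by a single-digit change.
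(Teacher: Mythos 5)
Your argument is correct and is essentially the $q=3$ specialization of the paper's own proof of Theorem~\ref{th-css} (the lemma itself is only cited, but Theorem~\ref{th-css} generalizes it with the identical two-case structure: the geometric sum over $n$ when $i_{\pi(1)}\neq i'_{\pi(1)}$, and the size-$q$ blocks obtained by overwriting digit $\pi(v-1)$ otherwise). If anything you are more careful than the paper, which asserts the block decomposition without verifying that $i^{(t)}+\tau$ is the corresponding digit-modification of $i+\tau$ (no carry interference) or that the blocks partition the restricted index set --- the points you rightly single out as the delicate ones.
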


Generalizing the construction of Lemma \ref{le-css}, we obtain the following Theorem \ref{th-css}.
\begin{Theorem}\label{th-css}
  Let
  \begin{align}
    f(\mathbf{x}) &= \alpha\sum\limits_{k=1}^{m-1}x_{\pi(k)}x_{\pi(k+1)}+\sum_{l=1}^{q-1}\sum_{k=1}^{m}c_{l,k}x_k^l+c_0,
  \end{align}
  where $\alpha\in\mathbb{Z}_q^{*}$ and $\alpha$ is coprime with $q$, $c_0,c_{l,k}\in\mathbb{Z}_q$, $\pi$ is a permutation of the set $\{1,2,\cdots,m\}$, $m$ is a positive integer. Then the set $\mathbf{F}=\{\mathbf{f}_n\}_{n=0}^{q-1}$ form a $q$-ary CSS of length $q^m$, where
  \begin{align}
    f_n(\mathbf{x})=f(\mathbf{x})+nx_{\pi(1)}.
  \end{align}
\end{Theorem}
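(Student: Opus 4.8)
\emph{Proof proposal.} The plan is to run the classical Davis--Jedwab-type pairing argument that underlies Lemma~\ref{le-css}, but with the binary ``bit flips'' replaced by length-$q$ ``digit orbits''; this is exactly the step where the coprimality of $\alpha$ and $q$ enters. Throughout, write $\mathbf{i}=(i_1,\dots,i_m)$ for the $q$-ary vector of an integer $i$ and $\mathbf{i}'$ for that of $i+\tau$. Fix $\tau$ with $1\le\tau\le q^m-1$; the range $1-q^m\le\tau\le-1$ then follows from $R_{\mathbf{F}}(-\tau)=R_{\mathbf{F}}^{*}(\tau)$, obtained by applying~\eqref{eq-RabRba} to each row $\mathbf{f}_n$ and summing. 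Since $f_n(\mathbf{x})=f(\mathbf{x})+nx_{\pi(1)}$,
\begin{align*}
  R_{\mathbf{F}}(\tau)=\sum_{n=0}^{q-1}R_{\mathbf{f}_n}(\tau)
  =\sum_{i=0}^{q^m-1-\tau}\xi^{\,f(\mathbf{i})-f(\mathbf{i}')}\sum_{n=0}^{q-1}\bigl(\xi^{\,i_{\pi(1)}-i'_{\pi(1)}}\bigr)^{n}.
\end{align*}
The inner sum equals $q$ if $i_{\pi(1)}=i'_{\pi(1)}$ and $0$ otherwise (in the latter case $\xi^{\,i_{\pi(1)}-i'_{\pi(1)}}$ is a nontrivial $q$-th root of unity), so it remains only to prove $\Sigma:=\sum_{i}\xi^{\,f(\mathbf{i})-f(\mathbf{i}')}=0$, where $i$ runs over the indices in $[0,q^m-1-\tau]$ with $i_{\pi(1)}=i'_{\pi(1)}$.

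For such an $i$ we have $\mathbf{i}\ne\mathbf{i}'$ but $i_{\pi(1)}=i'_{\pi(1)}$, so there is a least index $j\in\{2,\dots,m\}$ with $i_{\pi(j)}\ne i'_{\pi(j)}$. I would partition the summation range into \emph{orbits}: the orbit of $i$ consists of the $q$ indices $i^{(s)}$, $s\in\mathbb{Z}_q$, whose $q$-ary vector agrees with $\mathbf{i}$ in every coordinate except coordinate $\pi(j-1)$, which is reset to $s$. The delicate bookkeeping here — the step I expect to need the most care — is checking that this partition is legitimate for the \emph{aperiodic} correlation. The crucial observation is that coordinate $\pi(j-1)$ lies strictly before the first disagreement, so $i_{\pi(j-1)}=i'_{\pi(j-1)}$; hence $i^{(s)}+\tau$ is precisely $i+\tau$ with coordinate $\pi(j-1)$ reset to $s$, which is still a legal integer in $[0,q^m-1]$. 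Consequently $i^{(s)}\le q^m-1-\tau$, the pair $(i^{(s)},\,i^{(s)}+\tau)$ again has equal $\pi(1)$-th digits, and its least disagreement index is again $j$; so the orbits genuinely partition the index set of $\Sigma$.

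It then remains to evaluate $\Sigma$ orbit by orbit. In $f(\mathbf{i}^{(s)})-f(\mathbf{i}^{(s)}+\tau)$, every monomial of $f$ not involving coordinate $\pi(j-1)$ contributes an amount independent of $s$, and the single-coordinate terms $\sum_{l=1}^{q-1}c_{l,\pi(j-1)}x_{\pi(j-1)}^{l}$ cancel outright, because $\mathbf{i}^{(s)}$ and $\mathbf{i}^{(s)}+\tau$ carry the \emph{same} value $s$ in that coordinate — this is exactly why allowing arbitrary powers $x_k^{l}$ costs nothing, and is the feature that pushes the argument beyond the second-order case. The only surviving $s$-dependence comes from the two chain monomials touching $x_{\pi(j-1)}$, namely $\alpha x_{\pi(j-2)}x_{\pi(j-1)}$ (absent when $j=2$) and $\alpha x_{\pi(j-1)}x_{\pi(j)}$, so that
\begin{align*}
  f(\mathbf{i}^{(s)})-f(\mathbf{i}^{(s)}+\tau)\equiv\kappa+\alpha s\bigl[(i_{\pi(j-2)}-i'_{\pi(j-2)})+(i_{\pi(j)}-i'_{\pi(j)})\bigr]\pmod q
\end{align*}
for a constant $\kappa$ independent of $s$. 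Since $j$ is the first disagreement, $i_{\pi(j-2)}=i'_{\pi(j-2)}$, so the bracket collapses to $i_{\pi(j)}-i'_{\pi(j)}\not\equiv 0\pmod q$; and since $\gcd(\alpha,q)=1$, also $\alpha(i_{\pi(j)}-i'_{\pi(j)})\not\equiv0\pmod q$. Hence $\sum_{s=0}^{q-1}\xi^{\,f(\mathbf{i}^{(s)})-f(\mathbf{i}^{(s)}+\tau)}=\xi^{\kappa}\sum_{s=0}^{q-1}\xi^{\,\alpha s(i_{\pi(j)}-i'_{\pi(j)})}=0$. Summing over the orbits gives $\Sigma=0$, so $R_{\mathbf{F}}(\tau)=0$ for every $\tau\ne0$, which is the claim. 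In short, once the orbit bookkeeping of the middle step is pinned down the rest is routine: the algebraic heart is the one-line vanishing of a geometric sum over a nontrivial $q$-th root of unity, which needs nothing beyond $\gcd(\alpha,q)=1$.
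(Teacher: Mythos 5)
Your proposal is correct and follows essentially the same route as the paper's proof: split on whether $i_{\pi(1)}$ and $(i+\tau)_{\pi(1)}$ agree, kill the first case with the geometric sum over $n$, and in the second case group the surviving indices into size-$q$ orbits obtained by varying the digit in position $\pi(v-1)$ (the coordinate just before the first disagreement), where the chain term $\alpha x_{\pi(v-1)}x_{\pi(v)}$ and $\gcd(\alpha,q)=1$ force each orbit sum to vanish. Your write-up is in fact somewhat more careful than the paper's about why the orbit partition is compatible with the aperiodic summation range and why the terms $c_{l,k}x_k^l$ drop out.
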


\begin{proof}
  The case $m=1$ is easily checked by hand, so assume $m\geq2$ and fix $\tau\neq0$. By the definition of AACF, we have
  \begin{align}\begin{split}
     R_{\mathbf{F}}(\tau) = \sum_{n=0}^{q-1}\sum_{i=0}^{q^m-1-\tau} \xi^{f_{n,i}-f_{n,i+\tau}}
  \end{split}\end{align}
  where $f_{n,i}$ is the $i$-th element of sequence $\mathbf{f}_n$. For a given integer $i$, set $j=i+\tau$ and let $(i_1,i_2,\cdots,i_m)$ and $(j_1,j_2,\cdots,j_m)$ be the $q$-ary representation of $i$ and $j$, respectively.

  \emph{Case 1:} $i_{\pi(1)}\neq j_{\pi(1)}$, then
  \begin{align}\begin{split}
     R_{\mathbf{F}}(\tau) &= \sum_{i=0}^{q^m-1-\tau}\left[\xi^{f_i-f_{j}}\left(\sum_{n=0}^{q-1}\xi^{n(i_{\pi(1)}-j_{\pi(1)})}\right)\right]\\
     &=\sum_{i=0}^{q^m-1-\tau}\left(\xi^{f_i-f_{j}} \times \frac{1-(\xi^{i_{\pi(1)}-j_{\pi(1)}})^q}{1-\xi^{i_{\pi(1)}-j_{\pi(1)}}} \right)\\
     &= 0.
  \end{split}\end{align}
  where $f_i$ is the $i$-th element of sequence $\mathbf{f}$.

  \emph{Case 2:} $i_{\pi(1)} = j_{\pi(1)}$. Since $i\neq j$, we can define $v$ to be the smallest integer for $i_{\pi(v)}\neq j_{\pi(v)}$. Let $i^{(n)}$ and $j^{(n)}$ be integers which $q$-ary  representation are different from $i$ and $j$ in only one position $\pi(v-1)$, i.e., $i_{\pi(v-1)}^{(n)}=n\oplus i_{\pi(v-1)}$ and $j_{\pi(v-1)}^{(n)}=n\oplus j_{\pi(v-1)}$ for $n=1,2,\cdots,q-1$. Then
  \begin{align}\begin{split}
    &~~~~\xi^{f_i-f_j}+\xi^{f_{i^{(1)}}-f_{j^{(1)}}}+\cdots+\xi^{f_{i^{(q-1)}}-f_{j^{(q-1)}}}\\
    &=\xi^{f_i-f_j}\Big(1+\xi^{f_{i^{(1)}}-f_{j^{(1)}}-f_i+f_j}\\
    &\hspace{8em}+\cdots+\xi^{f_{i^{(q-1)}}-f_{j^{(q-1)}}-f_i+f_j}\Big)\\
    &=\xi^{f_i-f_j}\Big(1+\xi^{\alpha(i_{\pi(v)}-j_{\pi(v)})}\\
    &\hspace{8em}+\cdots+\xi^{(q-1)\alpha(i_{\pi(v)}-j_{\pi(v)})}\Big)\\
    &=\xi^{f_i-f_j}\times\frac{1-\xi^{q\alpha(i_{\pi(v)}-j_{\pi(v)})}}{1-\xi^{\alpha(i_{\pi(v)}-j_{\pi(v)})}}\\
    &=0.
  \end{split}\end{align}
  That is to say $\xi^{f_{n,i}-f_{n,j}}+\xi^{f_{n,i^{(1)}}-f_{n,j^{(1)}}}+\cdots+\xi^{f_{n,i^{(q-1)}}-f_{n,j^{(q-1)}}}=0$ for all $n=0,1,2,\cdots,q-1$.

  Combining both cases, we see that $R_{\mathbf{F}}(\tau)=0$. Therefore, sequence set $\mathbf{F}$ is a $q$-ary CSS.
\end{proof}

In the following, we will give an example to illustrate Theorem \ref{th-css}.
\begin{Example}
  Let $q=5$, $m=2$, $\alpha=1$, $\pi(1)=2$, $\pi(2)=1$, $c_0=0$, $(c_{1,1},c_{1,2})=(1,3)$, $(c_{2,1},c_{2,2})=(2,4)$, $(c_{3,1},c_{3,2})=(1,1)$, $(c_{4,1},c_{4,2})=(0,3)$. Then we can get a CSS with size $5$ and of length $25$ based on Theorem \ref{th-css}, i.e.,
  \begin{align*}
    \mathbf{f}_1 &= (0,4,3,3,0,1,1,1,2,0,3,4,0,2,1,\\
    &\hspace{10em}4,1,3,1,1,2,0,3,2,3),\\
    \mathbf{f}_2 &= (0,4,3,3,0,2,2,2,3,1,0,1,2,4,3,\\
    &\hspace{10em}2,4,1,4,4,1,4,2,1,2),\\
    \mathbf{f}_3 &= (0,4,3,3,0,3,3,3,4,2,2,3,4,1,0,\\
    &\hspace{10em}0,2,4,2,2,0,3,1,0,1),\\
    \mathbf{f}_4 &= (0,4,3,3,0,4,4,4,0,3,4,0,1,3,2,\\
    &\hspace{10em}3,0,2,0,0,4,2,0,4,0),\\
    \mathbf{f}_5 &= (0,4,3,3,0,0,0,0,1,4,1,2,3,0,4,\\
    &\hspace{10em}1,3,0,3,3,3,1,4,3,4).
  \end{align*}
  It is easy to verify that sequence set $\mathbf{F}=\{\mathbf{f}_n\}_{n=1}^{5}$ is a $(5,25)$-CSS.
\end{Example}

Based on Theorem \ref{th-css}, we obtain the following construction of ZCCS.
\begin{Theorem}\label{th-zccs}
  Let $q\geq2$, $m\geq 2$ and $v\leq m$ be a nonnegative integer, $\pi$ be a partition of $\{1,2,\cdots,m-v\}$. Let
  \begin{align}
    f(\mathbf{x}) &= \alpha\sum\limits_{k=1}^{m-v-1}x_{\pi(k)}x_{\pi(k+1)}+\sum_{l=1}^{q-1}\sum_{k=1}^{m}c_{l,k}x_k^l+c_0,
  \end{align}
  where $\alpha\in\mathbb{Z}_q^{*}$ and $\alpha$ is coprime with $q$, $c_0,c_{l,k}\in\mathbb{Z}_q$. Then, $\mathcal{S}=\{\mathbf{S}^0,\mathbf{S}^1,\cdots,\mathbf{S}^{q^{v+1}-1}\}$ is a $(q^{v+1},q,q^m,q^{m-v})$-ZCCS, where $\mathbf{S}^{p}=\{\mathbf{s}_0^p,\mathbf{s}_1^p,\cdots,\mathbf{s}_{q-1}^p\}$ and
  \begin{align}\begin{split}
    s_n^p(\mathbf{x}) &= f(\mathbf{x})+nx_{\pi(1)}\\
    &\hspace{1em}+\beta\left({p_1x_{\pi(m-v)}}+\sum_{k=1}^{v}p_{k+1}x_{m-v+k}\right)
  \end{split}\end{align}
  for $n=0,1,\cdots,q-1$ and $p=0,1,\cdots,q^{v+1}-1$ with $q$-ary representation $(p_1,p_2,\cdots,p_{v+1})$, where $\beta\in\mathbb{Z}_q^{*}$ and $\beta$ is coprime with $q$.
\end{Theorem}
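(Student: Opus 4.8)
The plan is to verify the three conditions in the definition of a $(q^{v+1},q,q^m,q^{m-v})$-ZCCS directly, splitting into the auto-correlation case ($i=j$) and the cross-correlation case ($i\neq j$), and in each case reducing to the machinery already developed in Theorem~\ref{th-css}. First I would establish the \emph{auto-correlation} property: for a fixed $p$, the set $\mathbf{S}^p$ is obtained from the CSS $\mathbf{F}$ of Theorem~\ref{th-css} by adding to every row the \emph{same} linear form $\beta(p_1 x_{\pi(m-v)}+\sum_{k=1}^v p_{k+1}x_{m-v+k})$. Since this added term is independent of the row index $n$, it cancels in every difference $s_n^p(\mathbf{x})-s_n^p(\mathbf{x}')$ whenever $\mathbf{x},\mathbf{x}'$ agree on the coordinates $\pi(m-v),m-v+1,\dots,m$; more carefully, one should note that the construction in Theorem~\ref{th-css} only uses the chain $x_{\pi(1)}-x_{\pi(2)}-\cdots-x_{\pi(m-v)}$ and quadratic-free higher powers, so the same Case~1 / Case~2 argument applies verbatim with $m$ replaced by $m-v$ in the chain, and $R_{\mathbf{S}^p}(\tau)=0$ for all $\tau\neq 0$, giving even the stronger statement that each $\mathbf{S}^p$ is a genuine CSS (hence the ZCZ $Z=q^{m-v}$ is achieved for auto-correlations with room to spare).

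Next I would handle the \emph{cross-correlation} property: fix $p\neq p'$ and a shift $\tau$ with $|\tau|\le q^{m-v}-1$, and show $R_{\mathbf{S}^p,\mathbf{S}^{p'}}(\tau)=0$. Write $R_{\mathbf{S}^p,\mathbf{S}^{p'}}(\tau)=\sum_{n=0}^{q-1}\sum_{i}\xi^{s_n^p(\mathbf{i})-s_n^{p'}(\mathbf{j})}$ with $\mathbf{j}$ the $q$-ary vector of $j=i+\tau$. The key observation is that because $|\tau|<q^{m-v}$, the addition $i\mapsto i+\tau$ cannot change any of the high-order digits $i_{m-v+1},\dots,i_m$ that sit above position $q^{m-v-1}$; more precisely, one must track that a carry can only propagate up to position $m-v$, so $i_{m-v+k}=j_{m-v+k}$ for $k=1,\dots,v$, while $i_{\pi(m-v)}$ and $j_{\pi(m-v)}$ may differ. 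Consequently the difference $s_n^p(\mathbf{i})-s_n^{p'}(\mathbf{j})$ equals $f(\mathbf{i})-f(\mathbf{j})+n(i_{\pi(1)}-j_{\pi(1)}) + \beta p_1 i_{\pi(m-v)}-\beta p_1' j_{\pi(m-v)} + \beta\sum_{k=1}^v (p_{k+1}-p_{k+1}')i_{m-v+k}$. If $(p_2,\dots,p_{v+1})\neq(p_2',\dots,p_{v+1}')$, pick the largest index where they differ; summing over the corresponding free high-order digit $i_{m-v+k}$ (which ranges over all of $\mathbb{Z}_q$ as $i$ ranges over its block, and on which nothing else in the exponent depends) yields a geometric sum $\sum_{t=0}^{q-1}\xi^{\beta(p_{k+1}-p_{k+1}')t}=0$ since $\beta$ is a unit. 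If instead $(p_2,\dots,p_{v+1})=(p_2',\dots,p_{v+1}')$ but $p_1\neq p_1'$, then the exponent contains $\beta(p_1-p_1')$ times a digit in position $\pi(m-v)$, and I would run the Case~1 summation of Theorem~\ref{th-css} but over the index $\pi(m-v)$ (the end of the chain): when $i_{\pi(m-v)}\neq j_{\pi(m-v)}$ pair up via the other chain neighbor as in Case~2, and when $i_{\pi(m-v)}=j_{\pi(m-v)}$ the extra term $\beta(p_1-p_1')i_{\pi(m-v)}$ is a constant that factors out and the inner sum over $n$ (Case~1) or the chain-pairing (Case~2) of the underlying Theorem~\ref{th-css} argument already gives $0$. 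Either way the cross-correlation vanishes.

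The main obstacle, and the step that deserves the most care, is the carry-propagation bookkeeping: one must argue rigorously that $|\tau|\le q^{m-v}-1$ forces $i$ and $j=i+\tau$ to share their top $v$ $q$-ary digits, and to understand exactly which of the positions in the $\pi$-chain can be affected. This is what pins down why the ZCZ width is $q^{m-v}$ and not larger, and it is the place where an off-by-one or a mishandled carry would break the proof. A clean way to organize it is to write $i = a\, q^{m-v} + b$ with $0\le b < q^{m-v}$ and $0\le a < q^v$, observe that $i+\tau$ either stays in the same block $a$ or moves to $a+1$ (and the latter only when $b+\tau\ge q^{m-v}$, which together with the sign cases must be tracked), and then note that even when the block index changes the high digits $(p_2,\dots,p_{v+1})$-comparison argument still kills the sum because those digits enter the exponent only through the fixed indices $m-v+1,\dots,m$. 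Once the digit structure is nailed down, everything else is a reprise of the geometric-sum cancellations already proved in Theorem~\ref{th-css}, so I would cite that theorem's Case~1 and Case~2 computations rather than redo them. Finally I would remark that the count $q^{v+1}=q\cdot q^{\lfloor q^m/q^{m-v}\rfloor}$... actually $q^{v+1}=q\lfloor q^m/q^{m-v}\rfloor$, matching the Feng bound of Lemma~\ref{eq-bound-zccs} with $N=q$, $L=q^m$, $Z=q^{m-v}$, so the constructed ZCCS is optimal.
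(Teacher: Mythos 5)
There are two genuine gaps. First, your claim that each $\mathbf{S}^p$ is ``a genuine CSS'' with $R_{\mathbf{S}^p}(\tau)=0$ for \emph{all} $\tau\neq0$ is false, and the reason it is false is exactly the point the theorem turns on. The Case~1/Case~2 machinery of Theorem~\ref{th-css} needs, for every pair $i\neq j=i+\tau$, either $i_{\pi(1)}\neq j_{\pi(1)}$ or a smallest chain index where $i$ and $j$ differ; since the chain now only covers positions $\{1,\dots,m-v\}$, both cases fail whenever $i$ and $j$ agree on all of the low $m-v$ digits, which happens precisely when $q^{m-v}\mid\tau$. So the argument applies ``verbatim'' only for $0<|\tau|<q^{m-v}$, and indeed $R_{\mathbf{S}^p}(q^{m-v})\neq0$ in general: in the paper's Example~\ref{ex-zccs}, the first two length-$9$ blocks of each row of $\mathbf{S}^0$ coincide and the third is the first shifted by $1$, giving $R_{\mathbf{S}^0}(9)=27(1+\xi^{-1})\neq0$. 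You still get what the theorem needs, but only after restricting to the ZCZ, and your write-up does not make that restriction.

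Second, in the cross-correlation case your carry bookkeeping is not just ``the step that deserves the most care''---it is wrong as stated, and the proposed repair does not close it. It is not true that $|\tau|<q^{m-v}$ forces $i$ and $j=i+\tau$ to share their top $v$ digits: writing $i=aq^{m-v}+b$, the terms with $b+\tau\geq q^{m-v}$ have $j$ in block $a+1$, and the map $a\mapsto a+1$ mixes the top digits through cascading carries. For those terms your mechanism---fix everything and run a geometric sum $\sum_t\xi^{\beta(p_{k+1}-p'_{k+1})t}$ over one free top digit---does not apply, because the exponent contains $\beta\langle p_{\mathrm{top}},\mathrm{digits}(a)\rangle-\beta\langle p'_{\mathrm{top}},\mathrm{digits}(a+1)\rangle$ and the sum runs over only $q^v-1$ values of $a$; a small example ($q=2$, $v=2$, $p_{\mathrm{top}}=(1,0)$, $p'_{\mathrm{top}}=(0,0)$) gives $\sum_{a=0}^{2}(-1)^{a_1}=1\neq0$. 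The carry terms must instead be killed by the \emph{low-digit} structure: they assemble into constant phases times $\sum_n R_{\mathbf{s}_{n,0}^{r},\mathbf{s}_{n,0}^{t}}(\tau-q^{m-v})$, which vanishes by Theorem~\ref{th-css} (and its two-set analogue when $p_1\neq p'_1$). This is exactly what the paper's proof buys by writing each $\mathbf{s}_n^p$ as a concatenation of $q^v$ translates $\mathbf{s}_{n,0}^p\oplus d_k$ of a single base block and decomposing every correlation into aligned and adjacent block pairs; I would strongly recommend adopting that decomposition rather than arguing digit-by-digit on the full length-$q^m$ index. Your $\tau=0$ computation and the non-carry part of the argument are fine, as is the optimality count at the end.
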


\begin{proof}
  For any $\mathbf{S}^p$, each sequence $\mathbf{s}_n^p$ $(n=0,1,\cdots,q-1)$ is divided into $q^v$ sub-sequences, i.e.,
  \begin{align}
    \mathbf{s}_n^p = \mathbf{s}_{n,0}^p|\mathbf{s}_{n,1}^p|\cdots|\mathbf{s}_{n,q^v-1}^p
  \end{align}
  More precisely, starting from the second sub-sequence, each sub-sequence can be obtained by adding a number to the first sub-sequence, i.e.,
  \begin{align}\label{eq-21}
    \mathbf{s}_n^p = \mathbf{s}_{n,0}^p|(\mathbf{s}_{n,0}^p\oplus d_1)|\cdots|(\mathbf{s}_{n,0}^p\oplus d_{q^v-1})
  \end{align}
  where $d_i\in\mathbb{Z}_q$. According to Theorem \ref{th-css}, we know that sequence set $\{\mathbf{s}_{0,0}^p,\mathbf{s}_{1,0}^p,\cdots,\mathbf{s}_{q-1,0}^p\}$ forms a CSS. Combine Eq. (\ref{eq-21}), we can obtain that $\mathbf{S}^p$ has zero auto-correlation zone for $0<\tau<q^{m-v}$.

  For any two different sequence sets $\mathbf{S}^{r}$ and $\mathbf{S}^{t}$, like Eq. (\ref{eq-21}), we can get
  \begin{align}
    \label{eq-22}\mathbf{s}_n^{r} = \mathbf{s}_{n,0}^{r}|(\mathbf{s}_{n,0}^{r}\oplus x_1)|\cdots|(\mathbf{s}_{n,0}^{r}\oplus x_{q^v-1}),\\
    \label{eq-23}\mathbf{s}_n^{t} = \mathbf{s}_{n,0}^{t}|(\mathbf{s}_{n,0}^{t}\oplus y_1)|\cdots|(\mathbf{s}_{n,0}^{t}\oplus y_{q^v-1}),
  \end{align}
  where $x_i,~y_i\in \mathbb{Z}_q$. Let $(r_1,r_2,\cdots,r_m)$ and $(t_1,t_2,\cdots,t_m)$ be $q$-ary representation of $r$ and $t$ respectively.
  \begin{itemize}
    \item Case 1: $r_1=t_1$, then $\mathbf{s}_{n,0}^{r}=\mathbf{s}_{n,0}^{t}$. For any $0<\tau<q^{m-v}$, we have
        \begin{align*}
          R_{\mathbf{S}^{r},\mathbf{S}^{t}}(\tau) &= \sum_{n=0}^{q-1}R_{\mathbf{s}_{n}^{r},\mathbf{s}_{n}^{t}}(\tau)\\
          &= q^v\sum_{n=0}^{q-1}R_{\mathbf{s}_{n,0}^{r}}(\tau)+\sum_{k=1}^{q^v-1}\xi^{x_k-y_k}
          \sum_{n=0}^{q-1}R_{\mathbf{s}_{n,0}^{r}}(\tau)\\
          &=\left(q^v+\sum_{k=1}^{q^v-1}\xi^{x_k-y_k}\right)\sum_{n=0}^{q-1}R_{\mathbf{s}_{n,0}^{r}}(\tau)\\
          &=0.
        \end{align*}
    \item Case 2: $r_{1}\neq t_{1}$, for any $0<\tau<q^{m-v}$, we have
    \begin{align*}
      R_{\mathbf{S}^{r},\mathbf{S}^{t}}(\tau) &=\sum_{n=0}^{q-1}R_{\mathbf{s}_{n}^{r},\mathbf{s}_{n}^{t}}(\tau)\\
      &=\left(q^v+\sum_{k=1}^{q^v-1}\xi^{x_k-y_k}\right)\sum_{n=0}^{q-1}R_{\mathbf{s}_{n,0}^{r},\mathbf{s}_{n,0}^{t}}(\tau).
    \end{align*}
    Similar to the proof of Theorem \ref{th-css}, we can get $\sum_{n=0}^{q-1}R_{\mathbf{s}_{n,0}^{r},\mathbf{s}_{n,0}^{t}}(\tau)=0$, so $R_{\mathbf{S}^{r},\mathbf{S}^t}(\tau)=0$.
  \end{itemize}

  Now, it only suffices to show that $R_{\mathbf{S}^{r},\mathbf{S}^{t}}(0)=0$. Our calculations yield the following result:
  \begin{align*}
    &\hspace{1em}R_{\mathbf{S}^{r},\mathbf{S}^{t}}(0) \\
    &= \sum_{n=0}^{q-1}\sum_{i=0}^{q^m-1}\xi^{s_{n,i}^{r}-s_{n,i}^{t}}\\
    &= \sum_{n=0}^{q-1}\sum_{i=0}^{q^m-1}\xi^{\beta\left[(r_{1}-t_{1})i_{\pi(m-v)}
    +\sum_{k=1}^{v}(r_{k+1}-t_{k+1})i_{m-v+k}\right]}\\
    &=0.
  \end{align*}

  To sum up, $\mathcal{S}=\{\mathbf{S}^0,\mathbf{S}^1,\cdots,\mathbf{S}^{q^{v+1}-1}\}$ is a $(q^{v+1},q,q^m,q^{m-v})$-ZCCS. This completes the proof.
\end{proof}

\begin{Remark}
  The ZCCSs from Theorem \ref{th-zccs} are optimal since $M=q^{v+1}=q\cdot(q^m/q^{m-v})=N\cdot\lfloor L/Z\rfloor$, which achieves the theoretical bound given in Eq. $(\ref{eq-bound-zccs})$.
\end{Remark}

\begin{Remark}
  The parameters of ZCCS obtained by Theorem \ref{th-zccs} can be either odd or even, which can not be realized by GBF before. In addition, our constructed ZCCSs are reduced to $(q,q^m)$-CCCs when $v=0$, the ZCCSs can have set size larger than the flock size by taking $v>0$.
\end{Remark}

Next, we give an example to illustrate Theorem \ref{th-zccs}.

\begin{Example}\label{ex-zccs}
  Taking $q=3$, $m=3$, $v=1$, $\pi(1)=1,~\pi(2)=2$, $\alpha=2$, $\beta=1$, $c_0=0$, $(c_{1,1},c_{1,2},c_{1,3})=(1,2,1)$, $(c_{2,1},c_{2,2},c_{2,3})=(0,1,2)$. Then, in Table \ref{ta-1}, we obtain a family of sequence set $\mathcal{S}=\{\mathbf{S}^0,\mathbf{S}^1,\cdots,\mathbf{S}^8\}$ by Theorem \ref{th-zccs}, it is easy to verify that $\mathcal{S}$ is a ternary $(9,3,27,9)$-ZCCS.
\end{Example}

\begin{table*}[t]
\centering
\caption{$(9,3,27,9)$-ZCCS in Example \ref{ex-zccs}}\label{ta-1}
\begin{tabular}{|c|}
\hline
 $\mathbf{S}^0$    \\
 $\begin{bmatrix}\mathbf{\mathbf{s}_0^0}\\
 \mathbf{\mathbf{s}_1^0}\\
 \mathbf{\mathbf{s}_2^0}
 \end{bmatrix} = \begin{bmatrix}
   0,1,2,0,0,0,2,1,0,0,1,2,0,0,0,2,1,0,1,2,0,1,1,1,0,2,1\\
   0,2,1,0,1,2,2,2,2,0,2,1,0,1,2,2,2,2,1,0,2,1,2,0,0,0,0\\
   0,0,0,0,2,1,2,0,1,0,0,0,0,2,1,2,0,1,1,1,1,1,0,2,0,1,2
 \end{bmatrix}$    \\ \hline
 $\mathbf{S}^1$    \\
 $\begin{bmatrix}\mathbf{\mathbf{s}_0^1}\\
 \mathbf{\mathbf{s}_1^1}\\
 \mathbf{\mathbf{s}_2^1}
 \end{bmatrix} = \begin{bmatrix}
   0,1,2,1,1,1,1,0,2,0,1,2,1,1,1,1,0,2,1,2,0,2,2,2,2,1,0\\
   0,2,1,1,2,0,1,1,1,0,2,1,1,2,0,1,1,1,1,0,2,2,0,1,2,2,2\\
   0,0,0,1,0,2,1,2,0,0,0,0,1,0,2,1,2,0,1,1,1,2,1,0,2,0,1
 \end{bmatrix}$    \\ \hline
  $\mathbf{S}^2$    \\
 $\begin{bmatrix}\mathbf{\mathbf{s}_0^2}\\
 \mathbf{\mathbf{s}_1^2}\\
 \mathbf{\mathbf{s}_2^2}
 \end{bmatrix} = \begin{bmatrix}
   0,1,2,2,2,2,0,2,1,0,1,2,2,2,2,0,2,1,1,2,0,0,0,0,1,0,2\\
   0,2,1,2,0,1,0,0,0,0,2,1,2,0,1,0,0,0,1,0,2,0,1,2,1,1,1\\
   0,0,0,2,1,0,0,1,2,0,0,0,2,1,0,0,1,2,1,1,1,0,2,1,1,2,0
 \end{bmatrix}$    \\ \hline
   $\mathbf{S}^3$    \\
 $\begin{bmatrix}\mathbf{\mathbf{s}_0^3}\\
 \mathbf{\mathbf{s}_1^3}\\
 \mathbf{\mathbf{s}_2^3}
 \end{bmatrix} = \begin{bmatrix}
   0,1,2,0,0,0,2,1,0,1,2,0,1,1,1,0,2,1,0,1,2,0,0,0,2,1,0\\
   0,2,1,0,1,2,2,2,2,1,0,2,1,2,0,0,0,0,0,2,1,0,1,2,2,2,2\\
   0,0,0,0,2,1,2,0,1,1,1,1,1,0,2,0,1,2,0,0,0,0,2,1,2,0,1
 \end{bmatrix}$    \\ \hline
   $\mathbf{S}^4$    \\
 $\begin{bmatrix}\mathbf{\mathbf{s}_0^4}\\
 \mathbf{\mathbf{s}_1^4}\\
 \mathbf{\mathbf{s}_2^4}
 \end{bmatrix} = \begin{bmatrix}
   0,1,2,1,1,1,1,0,2,1,2,0,2,2,2,2,1,0,0,1,2,1,1,1,1,0,2\\
   0,2,1,1,2,0,1,1,1,1,0,2,2,0,1,2,2,2,0,2,1,1,2,0,1,1,1\\
   0,0,0,1,0,2,1,2,0,1,1,1,2,1,0,2,0,1,0,0,0,1,0,2,1,2,0
 \end{bmatrix}$    \\ \hline
   $\mathbf{S}^5$    \\
 $\begin{bmatrix}\mathbf{\mathbf{s}_0^5}\\
 \mathbf{\mathbf{s}_1^5}\\
 \mathbf{\mathbf{s}_2^5}
 \end{bmatrix} = \begin{bmatrix}
   0,1,2,2,2,2,0,2,1,1,2,0,0,0,0,1,0,2,0,1,2,2,2,2,0,2,1\\
   0,2,1,2,0,1,0,0,0,1,0,2,0,1,2,1,1,1,0,2,1,2,0,1,0,0,0\\
   0,0,0,2,1,0,0,1,2,1,1,1,0,2,1,1,2,0,0,0,0,2,1,0,0,1,2
 \end{bmatrix}$    \\ \hline
   $\mathbf{S}^6$    \\
 $\begin{bmatrix}\mathbf{\mathbf{s}_0^6}\\
 \mathbf{\mathbf{s}_1^6}\\
 \mathbf{\mathbf{s}_2^6}
 \end{bmatrix} = \begin{bmatrix}
   0,1,2,0,0,0,2,1,0,2,0,1,2,2,2,1,0,2,2,0,1,2,2,2,1,0,2\\
   0,2,1,0,1,2,2,2,2,2,1,0,2,0,1,1,1,1,2,1,0,2,0,1,1,1,1\\
   0,0,0,0,2,1,2,0,1,2,2,2,2,1,0,1,2,0,2,2,2,2,1,0,1,2,0
 \end{bmatrix}$    \\ \hline
   $\mathbf{S}^7$    \\
 $\begin{bmatrix}\mathbf{\mathbf{s}_0^7}\\
 \mathbf{\mathbf{s}_1^7}\\
 \mathbf{\mathbf{s}_2^7}
 \end{bmatrix} = \begin{bmatrix}
   0,1,2,1,1,1,1,0,2,2,0,1,0,0,0,0,2,1,2,0,1,0,0,0,0,2,1\\
   0,2,1,1,2,0,1,1,1,2,1,0,0,1,2,0,0,0,2,1,0,0,1,2,0,0,0\\
   0,0,0,1,0,2,1,2,0,2,2,2,0,2,1,0,1,2,2,2,2,0,2,1,0,1,2
 \end{bmatrix}$    \\ \hline
    $\mathbf{S}^8$    \\
 $\begin{bmatrix}\mathbf{\mathbf{s}_0^8}\\
 \mathbf{\mathbf{s}_1^8}\\
 \mathbf{\mathbf{s}_2^8}
 \end{bmatrix} = \begin{bmatrix}
   0,1,2,2,2,2,0,2,1,2,0,1,1,1,1,2,1,0,2,0,1,1,1,1,2,1,0\\
   0,2,1,2,0,1,0,0,0,2,1,0,1,2,0,2,2,2,2,1,0,1,2,0,2,2,2\\
   0,0,0,2,1,0,0,1,2,2,2,2,1,0,2,2,0,1,2,2,2,1,0,2,2,0,1
 \end{bmatrix}$    \\ \hline
\end{tabular}
\end{table*}

\section{Construction of CCCs and MOCSSs Based on Concatenation}
In this section, a new construction of CCCs and MOCSSs will be proposed, which can make the lengths of CCCs and MOCSSs more extensive. Before giving a new construction of CCCs, we first give the following lemma.

\begin{Lemma}[\cite{Shen-Yang-2021}]\label{le-mocss}
Let $\mathbf{P}$ and $\mathbf{Q}$ be two mutually orthogonal $(N,L_1)$-ESCSSs, i.e., $R_{\mathbf{P},\mathbf{Q}}(\tau)=0$ for all $\tau$, and let $\mathcal{C}=\{\mathbf{C}^0,\mathbf{C}^1,\dots,\mathbf{C}^{M-1}\}$ be a $(M,L_2)$-CCC, where $M$ is even, and $\mathbf{C}^i$ is given by
	\begin{equation}
		\mathbf{C}^i=\begin{bmatrix}
		\mathbf{c}^i_0\\
		\mathbf{c}^i_1\\
		\vdots\\
		\mathbf{c}^i_{M-1}
		\end{bmatrix}.
	\end{equation}
For each $0\leq k\leq M-1$, define an $MN\times L_1L_2$ matrix $\mathbf{S}_k$
\begin{equation}
  \mathbf{S}^k=\begin{bmatrix}
	\mathbf{S}^{k,0}\\
	\mathbf{S}^{k,1}\\
	\vdots\\
	\mathbf{S}^{k,N-1}
\end{bmatrix},
\end{equation}
where $\mathbf{S}^{k,t}$ is an $M\times L_1L_2$ matrix, and the $j$-th row of $\mathbf{S}^{k,t}$ is given by
\begin{align}
  \mathbf{s}_j^{k,t}=\begin{cases}
    \phi(\mathbf{p}_t, \mathbf{c}_j^{2k},\mathbf{c}_j^{2k+1}), &\text{if~} 0\leq k\leq M/2-1,\\
    \phi(\mathbf{q}_t, \mathbf{c}_j^{2k-M},\mathbf{c}_j^{2k-M+1}), &\text{if~} M/2\leq k\leq M-1,
  \end{cases}
\end{align}
where
\begin{equation}
\phi(\mathbf{d},\mathbf{a},\mathbf{b})=
(\mathbf{a}\oplus d_0)|(\mathbf{b}\oplus d_1)|(\mathbf{a}\oplus d_2)|(\mathbf{b}\oplus d_3)|\cdots
\end{equation}
Then $\mathcal{S}=\{\mathbf{S}_1,\mathbf{S}_2,\cdots,\mathbf{S}_M\}$ is a $(M,MN,L_1L_2)$-MOCSS.
\end{Lemma}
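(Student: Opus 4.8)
The plan is to peel the construction apart one layer at a time. Since the aperiodic correlation of two sequence sets is the sum of the correlations of their corresponding rows, for any $0\le k,k'\le M-1$ we have
\[
R_{\mathbf S^{k},\mathbf S^{k'}}(\tau)=\sum_{t=0}^{N-1}\sum_{j=0}^{M-1}R_{\mathbf s_j^{k,t},\mathbf s_j^{k',t}}(\tau),
\]
so everything reduces to a single row correlation. Each row $\mathbf s_j^{k,t}=\phi(\mathbf d,\mathbf a,\mathbf b)$ is a concatenation of $L_1$ blocks of length $L_2$, the $i$-th block being $\mathbf e_j^{(i)}\oplus d_i$, where $\mathbf d$ is a row of $\mathbf P$ (if $k<M/2$) or of $\mathbf Q$ (if $k\ge M/2$), and $\mathbf e_j^{(i)}=\mathbf c_j^{a(k)}$ for even $i$, $\mathbf e_j^{(i)}=\mathbf c_j^{b(k)}$ for odd $i$, with $a(k)=2k\bmod M$ (an \emph{even} integer because $M$ is even) and $b(k)=a(k)+1$; write $\mathbf s_j^{k',t}$ likewise with primed data. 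Using $(\ref{eq-RabRba})$ it suffices to treat $\tau\ge0$ (the case $\tau<0$ then follows, since the bound holds for all ordered pairs $(k,k')$). Writing $\tau=\mu L_2+\rho$ with $0\le\rho<L_2$ and $0\le\mu\le L_1-1$, the first step is the elementary concatenation identity
\[
R_{\mathbf s_j^{k,t},\mathbf s_j^{k',t}}(\tau)=\sum_i \xi^{\,d_i-d'_{i+\mu}}\,R_{\mathbf e_j^{(i)},\mathbf e_j'^{(i+\mu)}}(\rho)+\sum_i \xi^{\,d_i-d'_{i+\mu+1}}\,R_{\mathbf e_j^{(i)},\mathbf e_j'^{(i+\mu+1)}}(\rho-L_2),
\]
obtained by splitting the defining sum of the ACCF according to which two blocks overlap and using $R_{\mathbf x\oplus c,\mathbf y\oplus c'}(\sigma)=\xi^{c-c'}R_{\mathbf x,\mathbf y}(\sigma)$, with the convention that out-of-range block indices or shifts contribute $0$; in particular, when $\rho=0$ the second sum disappears since its shift $-L_2$ is out of range. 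The decisive point is that each $\mathbf e_j^{(i)}$ is a row of the CCC $\mathcal C$ whose column index depends only on the parity of $i$ and on $k$ — never on $j$ or $t$.

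Summing over $j$ first therefore converts every block correlation into a CCC set-correlation $R_{\mathbf C^{a},\mathbf C^{b}}(\cdot)$, which is $ML_2$ when the shift is $0$ and $a=b$, and $0$ otherwise. If $\rho\ne0$ the surviving shifts $\rho$ and $\rho-L_2$ are both nonzero, so $R_{\mathbf S^{k},\mathbf S^{k'}}(\tau)=0$ at once. If $\rho=0$ and $\mu\ge1$ we are left with $\tau=\mu L_2$ and the single sum over $i$: since the CCC column index carried by $\mathbf e_j^{(i)}$ has the parity of $i$ (here "$M$ even" is used), a nonzero term forces $i$ and $i+\mu$ to share a parity, hence $\mu$ even, and forces $a(k)=a(k')$, equivalently $k\equiv k'\pmod{M/2}$. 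Thus $R_{\mathbf S^{k},\mathbf S^{k'}}(\mu L_2)=0$ whenever $\mu$ is odd or $k\not\equiv k'\pmod{M/2}$; and in the remaining case ($\mu$ even, $2\le\mu\le L_1-1$, $k\equiv k'\pmod{M/2}$) every block correlation equals $ML_2$ and the expression collapses to
\[
R_{\mathbf S^{k},\mathbf S^{k'}}(\mu L_2)=ML_2\sum_{t=0}^{N-1}R_{\mathbf d^{(t)},\mathbf d'^{(t)}}(\mu)=ML_2\,R_{\mathbf D,\mathbf D'}(\mu),
\]
where $\mathbf D,\mathbf D'\in\{\mathbf P,\mathbf Q\}$ and $\mathbf D=\mathbf D'$ unless $|k-k'|=M/2$. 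Here the hypotheses on $\mathbf P,\mathbf Q$ finally enter: if $\mathbf D=\mathbf D'$ then $R_{\mathbf D}(\mu)=0$ because $\mu$ is an even shift with $1<|\mu|\le L_1-1$ and $\mathbf D$ is an ESCSS, while if $\mathbf D\ne\mathbf D'$ then $R_{\mathbf P,\mathbf Q}(\mu)=0$ by mutual orthogonality; either way the correlation vanishes.

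It then only remains to treat $\tau=0$ (that is, $\mu=\rho=0$): for $k=k'$ every term equals $R_{\mathbf e_j^{(i)}}(0)=L_2$, summed over $L_1$ blocks, $M$ choices of $j$ and $N$ choices of $t$, giving $MNL_1L_2$, which is exactly the autocorrelation peak of an $(M,MN,L_1L_2)$-MOCSS; for $k\ne k'$ the same parity/index analysis leaves either $0$ (when $k\not\equiv k'\pmod{M/2}$, by the CCC property at shift $0$) or $L_2\sum_j R_{\mathbf P,\mathbf Q}(0)=0$ (when $|k-k'|=M/2$, by orthogonality). Combining all cases — and invoking $(\ref{eq-RabRba})$ once more for the negative shifts — yields precisely the MOCSS correlation pattern, so $\mathcal S$ is an $(M,MN,L_1L_2)$-MOCSS. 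I expect the only genuine difficulty to be organizational: keeping straight which CCC column and which of $\mathbf P,\mathbf Q$ each block carries as a function of the parities of $i$ and $\mu$ and of $k,k'\bmod M/2$, together with the boundary bookkeeping in the concatenation identity at $\rho=0$; once that is in place every vanishing is a one-line appeal to the CCC property, the ESCSS property, or the orthogonality of $\mathbf P$ and $\mathbf Q$, and the necessity of "$M$ even" surfaces exactly at the parity step (it makes $2k\bmod M$ even and makes the pairing $\mathbf c_j^{2k},\mathbf c_j^{2k+1}$ exhaust the rows of each $\mathbf C^i$).
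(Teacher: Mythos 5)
Your proof is correct. The paper itself gives no proof of this lemma --- it is imported verbatim from \cite{Shen-Yang-2021} --- but your argument (block decomposition of the concatenation, writing $\tau=\mu L_2+\rho$, summing over $j$ first to invoke the CCC property, then using the parity of the block index to reduce the $\rho=0$ case to $R_{\mathbf{D},\mathbf{D}'}(\mu)$ with $\mu$ even) is the standard and, as far as I can tell, complete route for this construction. As a bonus, your proof only ever uses $R_{\mathbf{P},\mathbf{Q}}(\mu)=0$ at \emph{even} shifts $\mu$, which is exactly the weakening of the hypothesis asserted without justification in the Remark following the lemma.
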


\begin{Remark}
  The condition of Lemma \ref{le-mocss} can be weakened, that is, $\mathbf{P}$, $\mathbf{Q}$ are two ESCSSs and $R_{\mathbf{P},\mathbf{Q}}(\tau)=0$ for all even $\tau$.
\end{Remark}

\begin{Theorem}\label{th-escss}
  Let $\mathcal{S}=\{\mathbf{S}^0,\mathbf{S}^1,\cdots,\mathbf{S}^{M-1}\}$ be a $(M,L)$-CCC, where $M$ is a nonnegative even number. Each of $\mathbf{S}^m$ is a $(M,L)$-CSS, i.e.,
  \begin{align}
  \mathbf{S}^m=\begin{bmatrix}
	\mathbf{s}_{0}^m\\
	\mathbf{s}_{1}^m\\
	\vdots\\
	\mathbf{s}_{M-1}^m
	\end{bmatrix}_{M\times L},
  \end{align}
Let
  \begin{align}
  \mathbf{A}^m=\begin{bmatrix}
	\mathcal{I}(\mathbf{s}_{0}^m,\mathbf{s}_{1}^m)\\
    \mathcal{I}(\mathbf{s}_{2}^m,\mathbf{s}_{3}^m)\\
	\vdots\\
	\mathcal{I}(\mathbf{s}_{M-2}^m,\mathbf{s}_{M-1}^m)
	\end{bmatrix}_{M/2 \times 2L},
  \end{align}
  where $0\leq m\leq M-1$. Then, any two $\mathbf{A}^{m_1}, \mathbf{A}^{m_2}$ ($m_1\neq m_2$), $R_{\mathbf{A}^{m_1},\mathbf{A}^{m_2}}(\tau)=0$ for all even $\tau$.
\end{Theorem}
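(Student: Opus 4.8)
The plan is to reduce the claim to the cross-correlation property of the underlying CCC $\mathcal{S}$ by means of an elementary identity that relates the \emph{even-shift} correlation of two interleaved sequences to correlations of their constituents. The key lemma I would establish first is: for any $q$-ary sequences $\mathbf{a},\mathbf{b},\mathbf{c},\mathbf{d}$ of common length $L$ and any integer $\sigma$,
\begin{align*}
  R_{\mathcal{I}(\mathbf{a},\mathbf{b}),\,\mathcal{I}(\mathbf{c},\mathbf{d})}(2\sigma)=R_{\mathbf{a},\mathbf{c}}(\sigma)+R_{\mathbf{b},\mathbf{d}}(\sigma).
\end{align*}
To prove it, write $\mathbf{u}=\mathcal{I}(\mathbf{a},\mathbf{b})$, $\mathbf{v}=\mathcal{I}(\mathbf{c},\mathbf{d})$, expand $R_{\mathbf{u},\mathbf{v}}(2\sigma)=\sum_i \xi^{u_i-v_{i+2\sigma}}$ and split the summation index $i$ by parity: for $i=2j$ one gets the exponent $u_{2j}-v_{2j+2\sigma}=a_j-c_{j+\sigma}$, and for $i=2j+1$ one gets $b_j-d_{j+\sigma}$. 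Because the shift $2\sigma$ is even, the even-index part never couples an ``$a$'' with a ``$d$''; the only point needing care is that the range of $j$ produced by each parity class coincides exactly with the summation range in the definition of $R_{\mathbf{a},\mathbf{c}}(\sigma)$, resp.\ $R_{\mathbf{b},\mathbf{d}}(\sigma)$ (for $\sigma<0$ one first reduces to $\sigma\geq0$ via \eqref{eq-RabRba}). This index bookkeeping is the only genuinely technical step.

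Granting the lemma, the theorem follows by summing over the rows and reindexing. For even $\tau=2\sigma$,
\begin{align*}
  R_{\mathbf{A}^{m_1},\mathbf{A}^{m_2}}(2\sigma)
  &=\sum_{t=0}^{M/2-1}R_{\mathcal{I}(\mathbf{s}_{2t}^{m_1},\mathbf{s}_{2t+1}^{m_1}),\,\mathcal{I}(\mathbf{s}_{2t}^{m_2},\mathbf{s}_{2t+1}^{m_2})}(2\sigma)\\
  &=\sum_{t=0}^{M/2-1}\Big(R_{\mathbf{s}_{2t}^{m_1},\mathbf{s}_{2t}^{m_2}}(\sigma)+R_{\mathbf{s}_{2t+1}^{m_1},\mathbf{s}_{2t+1}^{m_2}}(\sigma)\Big)
  =\sum_{r=0}^{M-1}R_{\mathbf{s}_{r}^{m_1},\mathbf{s}_{r}^{m_2}}(\sigma)
  =R_{\mathbf{S}^{m_1},\mathbf{S}^{m_2}}(\sigma).
\end{align*}
Since $\mathcal{S}$ is a $(M,L)$-CCC and $m_1\neq m_2$, the defining correlation property of a CCC gives $R_{\mathbf{S}^{m_1},\mathbf{S}^{m_2}}(\sigma)=0$ for every $\sigma$ with $|\sigma|\leq L-1$, in particular for $\sigma=0$; for $|\sigma|\geq L$ both sides vanish trivially. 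As $\tau=2\sigma$ even with $|\tau|\leq 2L-1$ forces $|\sigma|\leq L-1$, we conclude $R_{\mathbf{A}^{m_1},\mathbf{A}^{m_2}}(\tau)=0$ for all even $\tau$, which is the assertion.

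I expect the interleaving identity — precisely, checking that the parity decomposition of the index sum yields summation ranges that match the ACCF/AACF definitions, together with the reduction of negative shifts via \eqref{eq-RabRba} — to be the only place demanding attention. Everything afterward is the routine reindexing $r=2t$, $r=2t+1$ over the $M/2$ rows and an invocation of the CCC cross-correlation property. Note also that evenness of $\tau$ is essential: an odd shift $2\sigma+1$ would pair $a_j$ with $d$-entries and $b_j$ with $c$-entries, so the sum would not collapse to $R_{\mathbf{S}^{m_1},\mathbf{S}^{m_2}}$, which is why the statement is restricted to even shifts (hence an ESCSS rather than a full CSS).
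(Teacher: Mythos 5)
Your proposal is correct and follows essentially the same route as the paper: the paper's one-line proof is exactly your interleaving identity applied row by row, namely $R_{\mathbf{A}^{m_1},\mathbf{A}^{m_2}}(2t)=\sum_{k}\bigl[R_{\mathbf{s}_{2k}^{m_1},\mathbf{s}_{2k}^{m_2}}(t)+R_{\mathbf{s}_{2k+1}^{m_1},\mathbf{s}_{2k+1}^{m_2}}(t)\bigr]=R_{\mathbf{S}^{m_1},\mathbf{S}^{m_2}}(t)=0$. You merely make explicit the parity-splitting and index-range bookkeeping that the paper leaves implicit, which is a harmless (indeed welcome) elaboration rather than a different argument.
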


\begin{proof}
  Let $\tau=2t$ be an even number, for any $m_1\neq m_2$, we have
  \begin{align}\begin{split}
    R_{\mathbf{A}^{m_1},\mathbf{A}^{m_2}}(\tau) &= \sum\limits_{k=0}^{M/2-1}\left[R_{\mathbf{s}_{2k}^{m_1},\mathbf{s}_{2k}^{m_2}}(t)
    +R_{\mathbf{s}_{2k+1}^{m_1},\mathbf{s}_{2k+1}^{m_2}}(t)\right]\\
    &= R_{\mathbf{S}^{m_1},\mathbf{S}^{m_2}}(t)\\
    &=0,
  \end{split}\end{align}
  where last equal sign is due to that $\mathcal{S}$ is a CCC.
\end{proof}

\begin{Remark}
  According to the result of \cite{Shen-Yang-2020}, each $\mathbf{A}^m$ in Theorem \ref{th-escss} is a $(M/2,2L)$-ESCSS.
\end{Remark}

Based on Theorem \ref{th-escss}, we extend Lemma \ref{le-mocss} to construct CCCs.
\begin{Theorem}\label{th-ccc}
  Let $\mathcal{A}=\{\mathbf{A}^0,\mathbf{A}^1,\cdots,\mathbf{A}^{M_1-1}\}$ be a family of set derived from Theorem \ref{th-escss} and $\mathcal{B}$ be a $(M_2,L_2)$-CCC, where $M_1,~M_2$ are two even numbers, $\mathbf{A}^{m_1}$ ($0\leq m_1\leq M_1-1$) is an $(M_1/2,2L_1)$-ESCSS. Suppose $\mathbf{P}=\mathbf{A}^{2k}$, $\mathbf{Q}=\mathbf{A}^{2k+1}$ and $\mathcal{C}=\mathcal{B}$, we can get $M_1/2$ $(M_2,M_1M_2/2,2L_1L_2)$-MOCSSs $\{\mathcal{S}^0,\mathcal{S}^1,\cdots,\mathcal{S}^{M_1/2-1}\}$ through Lemma \ref{le-mocss}, where $0\leq k\leq M_1/2-1$. Then
  \begin{align}
    \mathcal{S}=\bigcup\limits_{i=0}^{M_1/2-1}\mathcal{S}^i
  \end{align}
  form a $(M_1M_2/2,2L_1L_2)$-CCC.
\end{Theorem}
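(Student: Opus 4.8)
The plan is to prove that $\mathcal{S}=\bigcup_{i=0}^{M_1/2-1}\mathcal{S}^i$ is a $(M_1M_2/2,\,M_1M_2/2,\,2L_1L_2)$-MOCSS; since its family size equals its flock size, this is precisely the statement that $\mathcal{S}$ is a $(M_1M_2/2,\,2L_1L_2)$-CCC. First I would check that Lemma~\ref{le-mocss} genuinely applies to each index $i$: by the Remark following Theorem~\ref{th-escss} every $\mathbf{A}^m$ is an $(M_1/2,2L_1)$-ESCSS, Theorem~\ref{th-escss} itself gives $R_{\mathbf{A}^{2i},\mathbf{A}^{2i+1}}(\tau)=0$ for all even $\tau$, and $\mathcal{B}$ is a $(M_2,L_2)$-CCC with $M_2$ even; hence the weakened hypotheses recorded in the Remark after Lemma~\ref{le-mocss} are met, and each $\mathcal{S}^i$ is a $(M_2,\,M_1M_2/2,\,2L_1L_2)$-MOCSS. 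In particular all auto-correlations at nonzero shift, and all cross-correlations between two matrices lying in the \emph{same} $\mathcal{S}^i$, already vanish, while each matrix has zero-shift auto-correlation $(M_1M_2/2)(2L_1L_2)$. So the whole problem reduces to showing that $R_{\mathbf{M},\mathbf{M}'}(\tau)=0$ for every $\tau$ whenever $\mathbf{M}\in\mathcal{S}^{i_1}$, $\mathbf{M}'\in\mathcal{S}^{i_2}$ with $i_1\neq i_2$.

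To treat that, recall that every row of a matrix in $\mathcal{S}^i$ has the form $\phi(\mathbf{d},\mathbf{c}^{a}_j,\mathbf{c}^{a+1}_j)$ with $a$ even, where $\mathbf{c}^{a}_j,\mathbf{c}^{a+1}_j$ are rows of matrices of $\mathcal{B}$ and $\mathbf{d}$ is a row of $\mathbf{P}=\mathbf{A}^{2i}$ (for the first $M_2/2$ matrices of $\mathcal{S}^i$) or of $\mathbf{Q}=\mathbf{A}^{2i+1}$ (for the remaining $M_2/2$). Because $i_1\neq i_2$, the superscripts $\{2i_1,2i_1+1\}$ appearing in $\mathbf{M}$ are disjoint from the superscripts $\{2i_2,2i_2+1\}$ appearing in $\mathbf{M}'$. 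I would expand $R_{\mathbf{M},\mathbf{M}'}(\tau)$ as a sum over the $M_1/2$ ``super-blocks'' $t$ and over the $M_2$ rows $j$ inside each, and then use the same decomposition of the aperiodic correlation of concatenated sequences that underlies the proof of Lemma~\ref{le-mocss} to split the correlation of two such rows into correlations of their $2L_1$ consecutive length-$L_2$ blocks. Writing $\tau=\lambda L_2+\mu$ with $0\le\mu\le L_2-1$ (the case $\tau<0$ reduced to this via \eqref{eq-RabRba}), every block-level contribution is a power of $\xi$ contributed by the additive constants coming from $\mathbf{d}$, multiplied by a correlation $R_{\mathbf{c}^{a'}_j,\mathbf{c}^{a''}_j}$ at shift $\mu$ or $\mu-L_2$, with $a'\in\{a_1,a_1+1\}$ and $a''\in\{a_2,a_2+1\}$ determined by the parities of the overlapping block indices.

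The argument then proceeds by cases. If $\mu\neq0$, summing over $j$ turns every term into $R_{\mathbf{C}^{a'},\mathbf{C}^{a''}}$ at a nonzero shift, which is $0$ by the CCC property of $\mathcal{B}$ whether or not $a'=a''$; hence $R_{\mathbf{M},\mathbf{M}'}(\tau)=0$. If $\mu=0$ and $\lambda$ is odd, the overlapping blocks have opposite parity, so after summing over $j$ each term becomes $R_{\mathbf{C}^{a_1},\mathbf{C}^{a_2+1}}(0)$ or $R_{\mathbf{C}^{a_1+1},\mathbf{C}^{a_2}}(0)$, which vanishes because these are distinct matrices of a CCC. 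If $\mu=0$, $\lambda$ is even, and $\mathbf{M},\mathbf{M}'$ use different pairs of matrices of $\mathcal{B}$ (that is $a_1\neq a_2$), the surviving sums over $j$ are $R_{\mathbf{C}^{a_1},\mathbf{C}^{a_2}}(0)=0$ and $R_{\mathbf{C}^{a_1+1},\mathbf{C}^{a_2+1}}(0)=0$. In the last remaining case $\mu=0$, $\lambda$ even, $a_1=a_2$, each inner correlation equals $L_2$, the sum over $j$ yields a factor $M_2$, and collecting the powers of $\xi$ leaves $M_2L_2\sum_{t}R_{\mathbf{d}^{(1)}_t,\mathbf{d}^{(2)}_t}(\lambda)=M_2L_2\,R_{\mathbf{A}^{m_1},\mathbf{A}^{m_2}}(\lambda)$, where $\mathbf{d}^{(1)}_t$ and $\mathbf{d}^{(2)}_t$ are the $t$-th rows of the relevant $\mathbf{A}^{m_1}$ and $\mathbf{A}^{m_2}$ with $m_1\in\{2i_1,2i_1+1\}$, $m_2\in\{2i_2,2i_2+1\}$; since $i_1\neq i_2$ forces $m_1\neq m_2$ and $\lambda$ is even, Theorem~\ref{th-escss} gives $R_{\mathbf{A}^{m_1},\mathbf{A}^{m_2}}(\lambda)=0$. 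This case also covers $\tau=0$ (take $\lambda=\mu=0$). Hence $R_{\mathbf{M},\mathbf{M}'}(\tau)=0$ for all $\tau$, and together with the intra-$\mathcal{S}^i$ properties this shows $\mathcal{S}$ is a $(M_1M_2/2,\,2L_1L_2)$-CCC.

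The step I expect to be the main obstacle is the final case: one must track the parities of the block indices carefully enough to recognise that the leftover ``outer'' sum is exactly $R_{\mathbf{A}^{m_1},\mathbf{A}^{m_2}}$ evaluated at the \emph{even} shift $\lambda$, so that the even-shift orthogonality furnished by Theorem~\ref{th-escss}---rather than the full orthogonality that would be required if the $\mathbf{A}^m$ were ordinary complementary sequence sets---is precisely what closes the argument. This is why the even-shift structure threaded through Theorem~\ref{th-escss} and the Remark after Lemma~\ref{le-mocss} is essential rather than merely convenient.
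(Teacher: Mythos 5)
Your proposal is correct and supplies exactly the argument the paper omits (it defers to Theorem~6 of \cite{Shen-Yang-2021}): the only genuinely new content beyond Lemma~\ref{le-mocss} is the cross-correlation between matrices of $\mathcal{S}^{i_1}$ and $\mathcal{S}^{i_2}$ with $i_1\neq i_2$, and your block decomposition with $\tau=\lambda L_2+\mu$, elimination of all terms via the CCC property of $\mathcal{B}$ except the $\mu=0$, $\lambda$ even, $a_1=a_2$ case, and closure of that case by the even-shift orthogonality $R_{\mathbf{A}^{m_1},\mathbf{A}^{m_2}}(\lambda)=0$ from Theorem~\ref{th-escss} is precisely the intended proof. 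Your parity bookkeeping (both $a_1,a_2$ even, so index coincidences force $\lambda$ even) is the key point and you have it right.
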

The proof of Theorem \ref{th-ccc} is the same as the proof of Theorem 6 of \cite{Shen-Yang-2021}. Due to the limitation of space, we will not repeat it here.

In Theorem \ref{th-ccc}, we construct $(M_1M_2/2,2L_1L_2)$-CCC based on two known $(M_1,L_1)$-CCC and $(M_2,L_2)$-CCC. We aim to construct CCC of length $L_1+L_2$, but it is very difficult, therefore, we look for construction that are not as attractive as the original aim. In the following theorem, we propose a construction of MOCSS with length $L_1+L_2$, and the ratio $M/N$ is $1/2$.

\begin{Theorem}\label{th-moscc}
  Let $\mathcal{A}$ be a $(M,L_1)$-CCC and $\mathcal{B}$ be a $(M,L_2)$-CCC. Let
  \begin{align}
    \mathbf{C}^m = \begin{bmatrix}
		\mathbf{A}^m|\mathbf{B}^m\\
        \mathbf{A}^m|-\mathbf{B}^m
		\end{bmatrix}_{2M\times(L_1+L_2)}
  \end{align}
   be an matrix of size $2M\times (L_1+L_2)$, where $-\mathbf{B}^m$ means that the additive inverse of $\mathbf{B}^m$ over $\mathbb{Z}_q$, and
  \begin{align}
    \mathbf{A}^m|\mathbf{B}^m=\begin{bmatrix}
		\mathbf{a}_0^m|\mathbf{b}_0^m\\
        \mathbf{a}_1^m|\mathbf{b}_1^m\\
        \vdots\\
        \mathbf{a}_{M-1}^m|\mathbf{b}_{M-1}^m
		\end{bmatrix}_{M\times(L_1+L_2)}.
  \end{align}
  Then $\mathcal{C}=\{\mathbf{C}^0,\mathbf{C}^1,\cdots,\mathbf{C}^{M-1}\}$ be a $(M,2M,L_1+L_2)$-MOCSS.
\end{Theorem}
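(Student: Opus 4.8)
The plan is to collapse the aperiodic correlations of the $2M\times(L_1+L_2)$ matrices $\mathbf{C}^m$ onto the correlations of the two ingredient CCCs $\mathcal{A}$ and $\mathcal{B}$, the only genuinely new contributions being the ``junction'' terms created when $\mathbf{a}_j^m$ and $\pm\mathbf{b}_j^m$ are concatenated; I will then show these junction terms cancel between the two row-blocks of $\mathbf{C}^m$. First I would fix $m_1,m_2\in\{0,1,\dots,M-1\}$ and a shift $\tau$ with $0\le|\tau|\le L_1+L_2-1$, write $R_{\mathbf{C}^{m_1},\mathbf{C}^{m_2}}(\tau)=\sum_{r=0}^{2M-1}R_{\mathbf{c}_r^{m_1},\mathbf{c}_r^{m_2}}(\tau)$ (with $\mathbf{c}_r^m$ the $r$-th row of $\mathbf{C}^m$), and regroup this sum according to the two row-blocks:
\begin{align*}
R_{\mathbf{C}^{m_1},\mathbf{C}^{m_2}}(\tau)=\sum_{j=0}^{M-1}\Big[R_{\mathbf{a}_j^{m_1}|\mathbf{b}_j^{m_1},\,\mathbf{a}_j^{m_2}|\mathbf{b}_j^{m_2}}(\tau)+R_{\mathbf{a}_j^{m_1}|(-\mathbf{b}_j^{m_1}),\,\mathbf{a}_j^{m_2}|(-\mathbf{b}_j^{m_2})}(\tau)\Big].
\end{align*}
The key tool is the elementary concatenation identity obtained by splitting the defining sum of $R_{\mathbf{a}|\mathbf{b},\,\mathbf{c}|\mathbf{d}}(\tau)$ (for $\mathbf{a},\mathbf{c}$ of length $L_1$ and $\mathbf{b},\mathbf{d}$ of length $L_2$) according to which half of the index set each of $i$ and $i+\tau$ lands in:
\begin{align*}
R_{\mathbf{a}|\mathbf{b},\,\mathbf{c}|\mathbf{d}}(\tau)=R_{\mathbf{a},\mathbf{c}}(\tau)+R_{\mathbf{b},\mathbf{d}}(\tau)+\Gamma(\mathbf{a},\mathbf{b},\mathbf{c},\mathbf{d};\tau),
\end{align*}
where for $\tau\ge0$ the junction term $\Gamma$ pairs a tail segment of $\mathbf{a}$ with a head segment of $\mathbf{d}$, for $\tau<0$ it pairs a tail segment of $\mathbf{b}$ with a head segment of $\mathbf{c}$, and $\Gamma(\cdot;0)=0$.

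I would then substitute this identity into the row sum. The $R_{\mathbf{a},\mathbf{c}}$--pieces combine to $2R_{\mathbf{A}^{m_1},\mathbf{A}^{m_2}}(\tau)$, because both blocks carry $\mathbf{A}^m$ in their first half; the $R_{\mathbf{b},\mathbf{d}}$--pieces combine to $R_{\mathbf{B}^{m_1},\mathbf{B}^{m_2}}(\tau)+R_{-\mathbf{B}^{m_1},-\mathbf{B}^{m_2}}(\tau)=2R_{\mathbf{B}^{m_1},\mathbf{B}^{m_2}}(\tau)$, since negating all entries of both arguments does not change a cross-correlation (up to a conjugation that is immaterial here, the relevant correlations being $0$ or real); and the junction pieces become $\sum_{j}\big(\Gamma_j(\tau)+\Gamma_j'(\tau)\big)$, where $\Gamma_j'$ is obtained from $\Gamma_j$ by replacing the relevant $\mathbf{b}$--block by its negative. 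Because negating each entry of a sequence multiplies every summand of $\Gamma$ by the constant $-1$, one has $\Gamma_j'(\tau)=-\Gamma_j(\tau)$, so the junction contribution cancels row by row --- a purely algebraic step that uses nothing about $\mathcal{A}$ or $\mathcal{B}$. Hence
\begin{align*}
R_{\mathbf{C}^{m_1},\mathbf{C}^{m_2}}(\tau)=2R_{\mathbf{A}^{m_1},\mathbf{A}^{m_2}}(\tau)+2R_{\mathbf{B}^{m_1},\mathbf{B}^{m_2}}(\tau),
\end{align*}
and since $\mathcal{A}$ is an $(M,L_1)$-CCC and $\mathcal{B}$ is an $(M,L_2)$-CCC, this right-hand side is $0$ for all $\tau$ whenever $m_1\ne m_2$, is $0$ whenever $m_1=m_2$ and $\tau\ne0$, and equals $2(ML_1+ML_2)=(2M)(L_1+L_2)$ when $m_1=m_2$ and $\tau=0$ (the last value is also immediate as $\sum_{r=0}^{2M-1}\|\mathbf{c}_r^m\|^2$). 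With $N=2M$ and $L=L_1+L_2$, these are exactly the defining conditions of a $(M,2M,L_1+L_2)$-MOCSS.

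I expect the main obstacle to be the careful handling of the junction term $\Gamma$: its summation range must be written correctly for both signs of $\tau$ and truncated appropriately when $|\tau|$ is large, one must confirm $\Gamma(\cdot;0)=0$ so that the $\tau=0$ count reduces cleanly to the sum of the row norms, and one must verify that replacing the second concatenand by its entrywise negative really flips the sign of $\Gamma$ term by term --- this last point is where the $\pm\mathbf{B}^m$ layout of $\mathbf{C}^m$ is essential. Once the correlation is decomposed into first-half, second-half, and junction contributions, the remainder is routine bookkeeping together with the two CCC defining properties.
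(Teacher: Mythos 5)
Your proof is correct and is essentially the paper's own argument: the paper likewise expands $R_{\mathbf{C}^{m_1},\mathbf{C}^{m_2}}(\tau)$ row by row into first-half correlations, second-half correlations, and junction terms (written there as $R^{*}_{\mathbf{b}_n^{m_2},\mathbf{a}_n^{m_1}}(L_1-\tau)$ and $R^{*}_{-\mathbf{b}_n^{m_2},\mathbf{a}_n^{m_1}}(L_1-\tau)$ over the three shift ranges $[0,L_1)$, $[L_1,L_2)$, $[L_2,L_1+L_2)$), cancels the junction terms between the two row blocks, and reduces everything to $2R_{\mathbf{A}^{m_1},\mathbf{A}^{m_2}}(\tau)+2R_{\mathbf{B}^{m_1},\mathbf{B}^{m_2}}(\tau)$, exactly as you do. The only caveat --- shared with the paper's proof --- is that your key step $\Gamma_j'(\tau)=-\Gamma_j(\tau)$ requires $-\mathbf{B}^m$ to be read as negation of the complex-valued sequence (i.e., $\mathbf{B}^m\oplus q/2$ for even $q$, the complement when $q=2$), since under the literal $\mathbb{Z}_q$ additive inverse of the exponents a junction summand transforms as $\xi^{a-d}\mapsto\xi^{a+d}$, which is not $-\xi^{a-d}$ in general.
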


\begin{proof}
  Referring to \cite{Wang-Adhikary-2020}, it is a simple observation that $\mathbf{C}^m$ is a $(2M,L_1+L_2)$-CSS for all $0\leq m\leq M-1$. After that, we just need to prove that $R_{\mathbf{C}^{m_1},\mathbf{C}^{m_2}}(\tau)=0$ for any $\tau$ and $0\leq m_1\neq m_2\leq M-1$. Let us assume that $L_1 \leq L_2$.

  When $0\leq\tau<L_1$, we have
  \begin{align}\begin{split}
    &\hspace{1em}R_{\mathbf{C}^{m_1},\mathbf{C}^{m_2}}(\tau)\\
    &= \sum\limits_{n=1}^{M-1}\Big[R_{\mathbf{a}_n^{m_1},\mathbf{a}_n^{m_2}}(\tau)
    +R_{\mathbf{b}_n^{m_1},\mathbf{b}_n^{m_2}}(\tau)\\
    &\hspace{1em}+R_{\mathbf{b}_n^{m_2},\mathbf{a}_n^{m_1}}^{*}(L_1-\tau)
    +R_{\mathbf{a}_n^{m_1},\mathbf{a}_n^{m_2}}(\tau)\\
    &\hspace{1em}+R_{-\mathbf{b}_n^{m_1},-\mathbf{b}_n^{m_2}}(\tau)
    +R_{-\mathbf{b}_n^{m_2},\mathbf{a}_n^{m_1}}^{*}(L_1-\tau)\Big]\\
    &=2\left[R_{\mathbf{A}^{m_1},\mathbf{A}^{m_2}}(\tau)
    +R_{\mathbf{B}^{m_1},\mathbf{B}^{m_2}}(\tau)\right]\\
    &=0.
  \end{split}\end{align}

  When $L_1\leq\tau<L_2$, we have
  \begin{align}\begin{split}
    &\hspace{1em}R_{\mathbf{C}^{m_1},\mathbf{C}^{m_2}}(\tau)\\
    &= \sum\limits_{n=1}^{M-1}\Big[R_{\mathbf{b}_n^{m_1},\mathbf{b}_n^{m_2}}(\tau)
    +R_{\mathbf{b}_n^{m_2},\mathbf{a}_n^{m_1}}^{*}(L_1-\tau)\\
    &\hspace{1em}+R_{-\mathbf{b}_n^{m_1},-\mathbf{b}_n^{m_2}}(\tau)
    +R_{-\mathbf{b}_n^{m_2},\mathbf{a}_n^{m_1}}^{*}(L_1-\tau)\Big]\\
    &=2R_{\mathbf{B}^{m_1},\mathbf{B}^{m_2}}(\tau)\\
    &=0.
  \end{split}\end{align}

  When $L_2\leq\tau<L_1+L_2$, we have
  \begin{align}\begin{split}
    &\hspace{1em}R_{\mathbf{C}^{m_1},\mathbf{C}^{m_2}}(\tau)\\
    &= \sum\limits_{n=1}^{M-1}\Big[R_{\mathbf{b}_n^{m_2},\mathbf{a}_n^{m_1}}^{*}(L_1-\tau)
    +R_{-\mathbf{b}_n^{m_2},\mathbf{a}_n^{m_1}}^{*}(L_1-\tau)\Big]\\
    &=0.
  \end{split}\end{align}

  In conclusion, we prove that $R_{\mathbf{C}^{m_1},\mathbf{C}^{m_2}}(\tau)=0$ for any $\tau$ and $0\leq m_1\neq m_2\leq M-1$. Therefore, $\mathcal{C}$ is a $(M,2M,L_1+L_2)$-MOCSS.
\end{proof}

\begin{Remark}
  When the ratio $M/N$ is $1/2$, compared with the constructions in \cite{Shen-Yang-2021} and \cite{Wu-Chen-2021}, Theorem \ref{th-moscc} can construct MOCSS with more lengths. For example, Theorem \ref{th-moscc} can construct binary $(2,4,11)$-MOCSS, but \cite{Shen-Yang-2021} and \cite{Wu-Chen-2021} cannot.
\end{Remark}

\section{Comparison and Conclusion}
\subsection{Comparison}
At present, constructions of ZCCSs are mainly divided into direct construction and indirect construction. The direct constructions are mainly based on generalized Boolean functions \cite{Wu-Chen-2018,Sarkar-Majhi-2019,Sarkar-Roy-2020,Wu-Sahin-2021,Sarkar-Majhi-2021}. There are many indirect construction methods, such as constructions based on ZPU matrices \cite{Das-Parampalli-2020}, constructions based on GCPs, ZCPs and unitary matrices \cite{Li-Xu-2015,Adhikary-Majh-2019} etc. In Table \ref{ta-zccs}, we compare the parameters of our proposed ZCCSs with that of the previous works. Compared with the previous constructions, our proposed construction has new parameters and is a direct construction.
\begin{table*}[t]
\centering
\caption{Summary of Existing ZCCS}\label{ta-zccs}
\begin{tabular}{|c|c|c|c|c|c|c}
\hline
Source &  Based on & Parameters & Conditions &Optimality & Remark\\ \hline
\cite{Wu-Chen-2018} & GBF & $(2^{k+v},2^k,2^m,2^{m-v})$ & $v\leq m,k\leq m-v$ & Optimal & Direct\\ \hline
\cite{Sarkar-Majhi-2019} & GBF & $(2^{k+p+1},2^{k+1},2^m,2^{m-p})$ & $k+p\leq m$ & Optimal & Direct\\ \hline
\cite{Sarkar-Majhi-2021} & GBF & $(2^{n+p},2^{n},2^m,2^{m-p})$ & $p\leq m$ & Optimal & Direct\\ \hline
\cite{Sarkar-Roy-2020} & GBF & \makecell{$(2^{n},2^{n},2^{m-1}+2,$\\$2^{m-2}+2^{\pi(m-3)}+1)$}  & $m\geq3$ & Optimal & Direct\\ \hline
\cite{Wu-Sahin-2021} & GBF & $(2^{k+v},2^k,2^m,2^{m-v})$  & $v\leq m,k\leq m-v$ & Optimal & Direct\\ \hline
\cite{Wu-Sahin-2021} & GBF & $(2^{\hat{k}},2^k,2^{m-1}+2^h,Z)$ & \makecell{$v,m_1$ are two non-negative \\ integer, $m_1\leq m$, $k\leq m-v$,\\$\hat{k}\leq k,m_1\leq h\leq m-v$} &Not Optimal & Direct\\ \hline
\cite{Das-Parampalli-2020} & \makecell{Butson-type \\Hadamard Matrices} & $(K,M,K,M)$ & $K,M\geq2$ & Optimal & Indirect \\ \hline
\cite{Das-Parampalli-2020} & \makecell{Optimal \\ZPU Matrices} & $(K,M,M^{N+1}P,M^{N+1})$ & $K,M\geq2$ & Optimal & Indirect\\ \hline
\cite{Li-Xu-2015} & GCP & $(rZ,L,rs,s)$ & $r,s\geq 2,s|Z$ & Not Optimal & Indirect\\ \hline
\cite{Adhikary-Majh-2019} & ZCP & $(2^m,2^m,L,Z)$ & $Z\geq \lceil\frac{L}{2}\rceil$ & Optimal & Indirect\\ \hline
This Paper & EBF & $(q^{v+1},q,q^m,q^{m-v})$ & $v\leq m$ & Optimal & Direct\\ \hline
\end{tabular}
\end{table*}

For the comparison of MOCSS, we mainly refer to \cite{Wu-Chen-2021} and \cite{Shen-Yang-2021}, both of which propose the construction of MOCSS with $M/N$ ratio of $1/2$. The construction in \cite{Wu-Chen-2021} is based on the second order GBF, which is a direct construction, but the length of the generated sequence is relatively limited. Literature \cite{Shen-Yang-2021} constructs MOCSS with flexible lengths based on known CCCs, even-shift complementary sequence sets (ESCSSs) and concatenation operator, the rules of connection are controlled by sequences in ESCSSs. In Table \ref{ta-mocss}, we give the length $(L\leq40)$ of binary MOCSS with set size $M=2$ of flock size $N=4$ constructed by \cite{Wu-Chen-2021}, \cite{Shen-Yang-2021} and this paper, respectively. The new length obtained by our proposed construction is marked in red.
\begin{table*}[t]
\centering
\caption{Binary $(2,4)$-MOCSS for Various Lengths}\label{ta-mocss}
\begin{tabular}{|c|c|}
\hline
Source & Lengths                                                                                                                                  \\ \hline
\cite{Wu-Chen-2021}    & \makecell{$4,6,8,10,12,16,18,20,24,32,34,36,40$} \\ \hline
\cite{Shen-Yang-2021}  & \makecell{$3,4,5,6,7,8,9,10,12,14,16,17,18,$\\$20,21,22,24,26,28,32,33,34,36,40$} \\ \hline
This Paper  & \makecell{$3,4,5,6,7,8,9,10,\textcolor{red}{11},12,14,16,17,18,20,$\\$21,22,24,26,\textcolor{red}{27},28,32,33,34,36,40$} \\ \hline
\end{tabular}
\end{table*}

\subsection{Conclusion}
In this paper, based on the work in \cite{Wang-Ma-2020}, we first proposed a direct construction of $q$-ary optimal ZCCS. Compared with the previous construction, our proposed construction can fill some of the gaps in ZCCS parameters. Secondly, we generalized the result of \cite{Shen-Yang-2021} to construct CCC. Finally, we proposed a construction of MOCSS with $M/N$ ratio of $1/2$, which is very simple, but many new lengths can be constructed.


\begin{thebibliography}{99}
\bibitem{Golay-1961}
M. J. E. Golay, ``Complementary series,'' \emph{IRE Trans. Inf. Theory}, vol. 7, no. 2, pp. 82--87, 1961.

\bibitem{Davis-1999}
J. A. Davis and J. Jedwab, ``Peak-to-mean power control in OFDM, Golay complementary sequences and Reed-Muller codes," \emph{IEEE Trans. Inf. Theory}, vol. 45, no. 7, pp. 2397--2417, 1999.

\bibitem{Paterson-2000}
K. G. Paterson, ``Generalized Reed-Muller codes and power control in OFDM modulation," \emph{IEEE Trans. Inf. Theory}, vol. 46, no. 1, pp. 104--120, 2000.

\bibitem{Spasojevic-2001}
P. Spasojevic and C. N. Georghiades, ``Complementary sequences for ISI channel estimation," \emph{IEEE Trans. Inf. Theory}, vol. 47, no. 3, pp. 1145--1152, 2001.

\bibitem{Pezeshki-2008}
A. Pezeshki, A. R. Calderbank, W. Moran and S. D. Howard, ``Doppler resilient Golay complementary waveforms," \emph{IEEE Trans. Inf. Theory}, vol. 54, no. 9, pp. 4254--4266, 2008.

\bibitem{Tseng-1972}
C. C. Tseng and C. Liu, ``Complementary sets of sequences,'' \emph{IEEE Trans. Inf. Theory}, vol. 18, no. 5, pp. 644--652, 1972.
	
\bibitem{Suehiro-1988}
N. Suehiro and M. Hatori, ``$N$-shift cross-orthogonal sequences,'' \emph{IEEE Trans. Inf. Theory}, vol. 34, no. 1, pp. 143--146, 1988.

\bibitem{Trots-2015}
I. Trots, ``Mutually orthogonal Golay complementary sequences in synthetic aperture imaging systems,'' \emph{Arch. Acoust.}, vol. 40, no. 2, pp. 283--289, 2015.

\bibitem{Zhang-2014}
Z. Zhang, F. Tian, F. Zeng, L. Ge and G. Xuan, ``Mutually orthogonal complementary pairs for OFDM-CDMA systems,'' \emph{12th Int. conf. Signal Process.}, HangZhou, pp. 1761--1765, 2014.

\bibitem{Tseng-2000}
S. M. Tseng and M. R. Bell, ``Asynchronous multicarrier DS-CDMA using mutually orthogonal complementary sets of sequences,'' \emph{IEEE Trans. Commun.}, vol. 48, no. 1, pp. 53--59, 2000.

\bibitem{Aparicio-2018}
J. Aparicio and T. Shimura, ``Asynchronous detection and identification of multiple users by multi-carrier modulated complementary set of sequences,'' \emph{IEEE Access}, vol. 6, pp. 22054--22069, 2018.

\bibitem{Liu-2014}
Z. Liu, Y. Guan and U. Parampalli, ``New complete complementary codes for peak-to-mean power control in multi-carrier CDMA,'' \emph{IEEE Trans. Commun.}, vol. 26, no. 3, pp. 1105--1113, 2014.

\bibitem{Rathinakumar-Chaturvedi-2008}
A. Rathinakumar and A. K. Chaturvedi, ``Complete mutually orthogonal Golay complementary sets from Reed Muller codes,'' \emph{IEEE Trans. Inf. Theory}, vol. 54, no. 3, pp. 1339--1346, 2008.

\bibitem{Das-Budisin-2018}
S. Das, S. Budi\v{s}in, S. Majhi, Z. Liu, and Y. L. Guan, ``A multiplier-free generator for polyphase complete complementary codes,'' \emph{IEEE Trans. Signal Process.}, vol. 66, no. 5, pp. 1184--1196, 2018.

\bibitem{Das-Majhi-2018}
S. Das, S. Majhi, and Z. Liu, ``A novel class of complete complementary codes and their applications for APU matrices,'' \emph{IEEE Signal Process. Lett.}, vol. 25, no. 9, pp. 1300--1304, 2018.

\bibitem{Wu-Chen-2021}
W. Wu, C. Chen and Z. Liu, ``How to construct mutually orthogonal complementary sets with non-power-of-two lengths?,'' \emph{IEEE Trans. Inf. Theory}, Early Access Article, 2021.

\bibitem{Shen-Yang-2021}
B. Shen, Y. Yang, Y. Feng and Z. Zhou, ``A generalised construction of mutually orthogonal complementary sequence sets with non-power-of-two lengths,''  \emph{IEEE Trans. Commun.}, Early Access Article, 2021.

\bibitem{Liu-Guan-2011}
Z. Liu, Y. Guan, B. C. Ng and H. Chen, ``Correlation and set size bounds of complementary sequences with low correlation zone,'' \emph{IEEE Trans. Commun.}, vol. 59, no. 12, pp. 3285--3289, 2011.

\bibitem{Feng-2013}
L. Feng, P. Fan and X. Zhou, ``Lower bounds on correlation of Zcomplementary code sets,''  \emph{Wireless Pers. Commun.}, vol. 72, no. 2, pp. 1475--1488, 2013.

\bibitem{Wang-Gao-2007}
H. Wang, X. Gao, B. Jiang, X. You and W. Hong, ``Efficient MIMO channel estimation using complementary sequences,'' \emph{IET Commun.}, vol. 1, no. 5, pp. 962--969, 2007.

\bibitem{Yuan-Tu-2008}
W. Yuan, Y. Tu and P. Fan, ``Optimal training sequences for cyclicprefix- based single-carrier multi-antenna systems with space-time blockcoding,'' \emph{IEEE Trans. Wireless Commun.}, vol. 7, no. 11, pp. 4074--4050, 2008.

\bibitem{Wu-Chen-2018}
S. Wu and C. Chen, ``Optimal Z-complementary sequence sets with good peak-to-average power-ratio property,''  \emph{IEEE Signal Process. Lett.}, vol. 25, no. 10, pp. 1500--1504, 2018.

\bibitem{Sarkar-Majhi-2019}
P. Sarkar, S. Majhi and Z. Liu, ``Optimal Z-complementary code set from generalized Reed-Muller codes,''  \emph{IEEE Trans. Commun.}, vol. 67, no. 3, pp. 1783--1796, 2019.

\bibitem{Sarkar-Majhi-2021}
P. Sarkar and S. Majhi, ``A direct construction of optimal ZCCS with maximum column sequence PMEPR two for MC-CDMA system,''  \emph{IEEE Commun. Lett.}, Early Access Article, 2021.

\bibitem{Sarkar-Roy-2020}
P. Sarkar, A. Roy and S. Majhi, ``Construction of Z-complementary code sets with non-power-of-two lengths based on generalized Boolean functions,''  \emph{IEEE Commun. Lett.}, vol. 24, no. 8, pp. 1607--1611, 2020.

\bibitem{Wu-Sahin-2021}
S. Wu, A. Sahin, Z. Huang and C. Chen, ``Z-complementary code sets with flexible lengths from generalized Boolean functions,''  \emph{IEEE Access}, vol. 9, pp. 4642--4652, 2021.

\bibitem{Das-Parampalli-2020}
S. Das, U. Parampalli, S. Majhi, Z. Liu and S. Budi\v{s}in, ``New optimal Z-complementary code sets based on generalized paraunitary matrices,''  \emph{IEEE Trans. Commun.}, vol. 68, pp. 5546--5558, 2020.

\bibitem{Chen-Li-2020}
X. Chen G. Li and H. Li, ``Two constructions of zero correlation zone aperiodic complementary sequence sets,''  \emph{IET Commun.}, vol. 14, no. 4, pp. 556--560, 2020.

\bibitem{Li-Xu-2015}
Y. Li and C. Xu, ``ZCZ aperiodic complementary sequence sets with low column sequence PMEPR,''  \emph{IEEE Commun. Lett.}, vol. 19, no. 8, pp. 1303--1306, 2015.

\bibitem{Adhikary-Majh-2019}
A. R. Adhikary and S. Majhi, ``New construction of optimal aperiodic Z-complementary sequence sets of odd-lengths,''  \emph{Electronics Lett.}, vol. 55, no. 19, pp. 1043--1045, 2019.

\bibitem{Wang-Ma-2020}
Z. Wang, D. Ma, G. Gong and E. Xue, ``New construction of complementary sequence (or array) sets and complete complementary codes,''  \emph{arXiv}, vol. 2001.04898v1, 2020.

\bibitem{Shen-Yang-2020}
B. Shen, Y. Yang and Z. Zhou, ``A construction of binary golay complementary sets based on even-shift complementary pairs,''  \emph{IEEE Access}, vol. 8, pp. 29882--29890, 2020.

\bibitem{Wang-Adhikary-2020}
G. Wang, A. R. Adhikary Z. Zhou and Y. Yang, ``Generalized constructions of complementary sets of sequences of lengths non-power-of-two,'' \emph{IEEE Signal Proces. Lett.}, vol. 27, pp. 136--140, 2020.
\end{thebibliography}
\end{document}